\documentclass[
 reprint,
 amsmath,amssymb,
 aps,
]{revtex4-2}

\usepackage{ulem}
\usepackage{graphicx}
\usepackage{dcolumn}
\usepackage{bm}
\usepackage{mathrsfs}
\usepackage{hyperref}
\usepackage{xcolor}
\usepackage{subcaption}
\usepackage{cancel}
\usepackage{tikz}
\usepackage{changes}
\usepackage[percent]{overpic}
\newtheorem{theorem}{Theorem}
\newtheorem{lemma}{Lemma}
\newcommand{\D}{D}

\begin{document}

\title{At Most One Inner Killing Horizon in Stationary Black Holes}

\author{Jing-Peng Ye$^{1}$}
\altaffiliation{Contributed equally to this work}
\author{Yu-Xuan Li$^{2,3}$}
\altaffiliation{Contributed equally to this work}
\author{Run-Qiu Yang$^{1}$}\email{aqiu@tju.edu.cn}
\author{Li Li$^{2,3,4}$}\email{liliphy@itp.ac.cn}

\affiliation{${}^{1}$Center for Joint Quantum Studies and Department of Physics, School of Science, Tianjin University, Yaguan Road 135, Jinnan District, 300350 Tianjin, P.~R.~China}
\affiliation{${}^{2}$Institute of Theoretical Physics, Chinese Academy of Sciences, Beijing 100190, China}
\affiliation{${}^{3}$School of Physical Sciences, University of Chinese Academy of Sciences, Beijing 100049, China}
\affiliation{${}^{4}$School of Fundamental Physics and Mathematical Sciences, Hangzhou Institute for Advanced Study, UCAS, Hangzhou 310024, China}

\begin{abstract} 
Stationary-axisymmetric black holes are the standard theoretical framework for astrophysical black holes, yet their interiors remain largely inaccessible: the absence of a hypersurface-orthogonal timelike Killing vector renders the metric non-diagonal, and frame dragging obstructs the usual techniques for probing the interior. We prove that any stationary-axisymmetric black hole satisfying the classical energy conditions contains at most one nondegenerate inner Killing horizon per connected interior branch, irrespective of its matter content. For compact horizon sections, we apply the Raychaudhuri equation to the congruence normal to constant-time hypersurfaces: the strong energy condition makes the squared lapse subharmonic, and the maximum principle excludes a second inner horizon. For noncompact planar and hyperbolic sections central to holography, the null energy condition instead yields a monotonicity obstruction through a geometric flow we introduce here, the ``inverse lapse flow,'' whose weak continuation via viscosity solutions we also construct. The attractive nature of gravity, encoded in these energy conditions, thus both drives horizon formation and caps the number of inner horizons, revealing a striking dual role of the same focusing mechanism. 
\end{abstract}

\maketitle

\textbf{Introduction.--}
Singularity theorems establish that gravitational collapse inevitably leads to spacetime singularities for ordinary matter, whose attractive nature under gravity is encoded in the classical energy conditions~\cite{Penrose:1964wq,Hawking:1966sx,Hawking:1970zqf}. Cosmic censorship then ensures that these singularities remain hidden behind event horizons~\cite{Penrose:1969pc,Penrose:1999vj}. In this picture, classical matter drives both the onset of collapse and the formation of horizons. Event horizons, however, are global objects: they demarcate the boundary of the exterior region, but they reveal little about the detailed structure of the black hole interior. To probe the interior, one must look instead at inner horizons that separate different interior regions. In stationary spacetimes, such inner horizons are generated by Killing vector fields and are therefore naturally described as Killing horizons. Given that classical matter plays a facilitating role in the formation of event horizons via the singularity theorems as well as cosmic censorship, it may be natural to expect that they also promote the appearance of more Killing horizons in the black hole interior.

However, the result is a little surprising. In the Kerr-Newman solution, even when Maxwell field appears, it has at most one Killing horizon (the Cauchy horizon is itself a Killing horizon) inside its event horizon. 
Such a configuration, a stationary pocket bounded by two consecutive inner horizons (see Fig.~\ref{fig:compact-pocket-single} below), would imply a layered internal structure absent in the standard vacuum solutions. How many horizons can appear inside an astrophysical black hole, especially for approximately stationary-axisymmetric case?
Yet despite its greater generality and physical importance, far less progress has been achieved in the stationary case than in its static counterpart. Recent work has established robust constraints on the number of inner horizons for broad classes of static black holes (see \emph{e.g.}~\cite{Hartnoll:2020rwq,Hartnoll:2020fhc,Cai:2020wrp,An:2021plu,Cai:2021obq,Peng:2026baq,Devecioglu:2021xug,Li:2025mhy,Yang:2021civ}), leveraging the fact that static spacetimes admit a hypersurface-orthogonal timelike Killing vector, which diagonalizes the metric and dramatically simplifies the analysis. Stationary spacetimes, by contrast, lack this key property, introducing a series of technical obstructions that have left the stationary case largely open~\cite{Brihaye_2016,gao2024internalstructurehairyrotating,Le_n_2025,Dias:2021afz}. 
Whether the attractive property of matter will promote or exclude multiple inner Killing horizons in general stationary spacetimes has so far been unclear.

This gap is worth addressing not only for its own sake, but also because it speaks to a deeper conceptual point. Ordinary matter is attractive under gravity, a property encoded in the classical energy conditions. This attractive nature plays a generative role in the singularity theorems: it forces gravitational collapse to produce singularities and, via cosmic censorship, horizons. But does the same focusing mechanism also play a restrictive role? That is, once a stationary black hole has formed, does it impose an upper bound on how many inner horizons its interior can support? If so, the attractive nature of gravity would encode not only the onset of collapse but also the ultimate structural simplicity of the black hole interior. This dual role has not been realized in previous studies.

In this Letter, we prove that for ordinary matter with focusing mechanism encoded in the standard energy conditions, a stationary black hole contains at most one nondegenerate inner Killing horizon on each connected interior branch. For compact horizon sections, the attractive nature (quantified by the strong energy condition, SEC) together with the Raychaudhuri equation renders the squared lapse subharmonic. The maximum principle then forbids a stationary pocket. For noncompact horizon sections,
the analogous focusing condition for null congruences—encoded in the null energy condition (NEC)—together with assumptions on topology and asymptotic behavior of the noncompact sector, yields an obstruction through a monotonicity argument via the inverse lapse flow. The generative role of this attractive nature
is already well known: through the Penrose–Hawking theorems, they drive focusing and horizon formation. What we demonstrate here is that the very same focusing mechanism,  when applied to the stationary interior, becomes incompatible with multiple inner horizons. Thus the same attractive nature of ordinary matter under gravity that triggers horizon formation also caps the number of inner horizons. This duality underscores the remarkable efficiency of the attractive nature of gravity in shaping both the exterior and the interior of black hole spacetimes.

Our result joins a growing body of theorems that uncover universal, topology-informed constraints on stationary black hole spacetimes. In a similar spirit, Cunha and Herdeiro showed that the exterior of a $(3+1)$-dimensional stationary-axisymmetric black hole must contain at least one light ring per rotation sense, independent of the matter content~\cite{Cunha_2020}. While their theorem concerns the exterior photon structure, ours concerns the interior horizon structure. Taken together, these results suggest that stationarity, together with the attractive nature of ordinary matter, imposes universal, surprisingly simple constraints on both sides of the event horizon.

\textbf{Stationary-axisymmetric geometry.--}
We consider a $(d+1)$-dimensional stationary-axisymmetric spacetime $(\mathcal{M},g_{\mu\nu})$ admitting two commuting Killing vectors $\{\xi_{(t)}^\mu,\xi_{(\phi)}^\mu\}$, where $\xi_{(t)}^\mu$ and $\xi_{(\phi)}^\mu$ generate stationarity and axial rotations, respectively, as motivated by the black-hole rigidity theorem~\cite{Hollands_2007}. We use the integral curves of these two Killing vectors as the coordinates $(t,\phi)$.
Assuming the spacetime is ``$t-\phi$'' reflection symmetric~\cite{Cunha_2020}, i.e., the metric remains invariant under the transformation $(t,\phi)\rightarrow(-t,-\phi)$, then a general stationary axisymmetric metric can be written as: 
\begin{equation}
\mathrm{d}s^2=-N(x)^2\text{d}t^2+\Psi(x)^2(\text{d}\phi-\omega(x) \text{d}t)^2+q_{AB}(x)\text{d}x^A \text{d}x^B,
\label{metric}
\end{equation}
with $N$ the lapse, $\omega=-g_{t\phi}/g_{\phi\phi}$ the local frame-dragging angular velocity, $\Psi^2=g_{\mu\nu}\xi_{(\phi)}^\mu\xi_{(\phi)}^\nu$, and $q_{AB}=g_{AB}$. Greek indices $\mu,\nu=0,\ldots,d$ refer to spacetime, and $A,B$ label the $d-1$ directions orthogonal to the two-dimensional Killing-orbit plane. Unlike the static case, the off-diagonal component $g_{t\phi}$ prevents the metric from being diagonalized, making the interior analysis substantially more involved.

\textbf{Killing horizons and stationary pockets.--}
The squared lapse is minus the norm of the local co-rotating vector field $\chi^\mu=\xi_{(t)}^\mu+\omega\xi_{(\phi)}^\mu$, namely $g_{\mu\nu}\chi^\mu\chi^\nu=-N^2$. A Killing horizon $\mathcal H$ is generated by 
\begin{equation}
\chi_\mathcal{H}^\mu=\xi_{(t)}^\mu+\Omega_\mathcal{H}\xi_{(\phi)}^\mu ,
\label{eq:horizon-generator}
\end{equation}
where $\Omega_\mathcal H=\omega|_{\mathcal{H}}$ is the constant angular velocity of the horizon, and the generator is a null Killing vector, i.e., $g_{\mu\nu}\chi_\mathcal H^\mu\chi_\mathcal H^\nu=0$~\cite{Cunha_2020}. Consequently,
\begin{equation}
N^2\big|_{\mathcal H}=-g_{\mu\nu}\chi_\mathcal{H}^\mu\chi_\mathcal{H}^\nu=0\,.
\label{eq:lapse-horizon}
\end{equation}
To see the converse, note that in the metric~\eqref{metric}, $N^2=-g_{\mu\nu}\chi^\mu\chi^\nu$. On a connected component of $N^2=0$, $\chi^\mu$ is null and orthogonal to both $\xi_{(t)}^\mu$ and $\xi_{(\phi)}^\mu$, which are tangent to the hypersurface. The weak rigidity theorem then implies $\omega$
is constant on that component~\cite{Carter:1969zz,Straumann:2013spu}; hence $\chi^\mu$ is a Killing vector on that component and the hypersurface is a Killing horizon. Thus, the hypersurface $\mathcal{H}$ is a Killing horizon if and only if $N^2|_{\mathcal{H}}=0$,  see Supplemental Material (SM Sec.\,S1) for more details\,\footnote{The Supplemental Material, which includes Refs.~\cite{Straumann:2013spu,Jang:1977kef,https://doi.org/10.1111/j.1749-6632.1973.tb41445.x,evans2022partial,10.1093/ptep/ptx072}, presents the technical details supporting the results made in the main text.}.

For the purpose of rigour, the black-hole interior is defined as the region reached by future-directed ingoing null geodesics that emanate from a single connected component of the event horizon~\cite{Yang:2021civ}. 
Using this definition, the Schwarzschild black hole has no inner Killing horizon, while the Kerr-Newman black holes have at most one inner Killing horizon, agreeing with the usual statement in the literature.
Suppose that two consecutive nondegenerate inner Killing horizons $\mathcal H_1$ and $\mathcal H_2$ occur on the same black hole interior branch, as in Fig.~\ref{fig:compact-pocket-single}. We call the region bounded by the two inner horizons a stationary pocket $\mathcal V$. On the constant-$t$ hypersurface $\Sigma_t$ for Eq.~\eqref{metric}, the spatial section of the pocket $\mathcal D:=\Sigma_t\cap\mathcal V$ obeys
\begin{equation}
N^2=0\quad\hbox{on}\quad\partial\mathcal D,\qquad
N^2>0\quad\hbox{in}\quad{\mathring{\mathcal D}}.
\label{eq:pocket}
\end{equation}
Here, $\partial\mathcal D=(\mathcal H_1\cup\mathcal H_2)\cap\Sigma_t$, and $\mathring{\mathcal D}$ denotes the interior of $\mathcal D$. Note that, in this region, $g^{\mu\nu}(\mathrm{d}t)_{\mu}(\mathrm{d}t)_{\nu}=-N^{-2}<0$, so $\Sigma_t$ is spacelike inside $\mathcal{D}$.
\begin{figure*}[tbph]
    \centering
    \captionsetup{
        justification=raggedright,
        singlelinecheck=false
    }
    \captionsetup[subfigure]{justification=centering}
    \begin{subfigure}[b]{0.44\textwidth}
        \centering
        \phantomsubcaption
        \label{fig:compact-pocket-single}
        \raisebox{0.55cm}{
            \begin{overpic}[
                width=0.8\linewidth
            ]{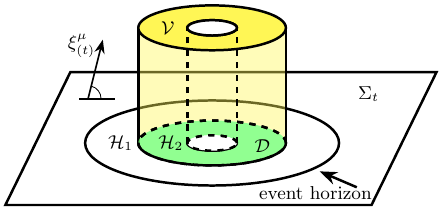}
                \put(2,42){
                    \makebox(0,0)[lt]{\textbf{(\thesubfigure)}}
                }
            \end{overpic}
        }
    \end{subfigure}
    \quad
    \begin{subfigure}[b]{0.46\textwidth}
        \centering
        \phantomsubcaption
        \label{fig:maximum principle}
        \begin{overpic}[
            width=0.8\linewidth
        ]{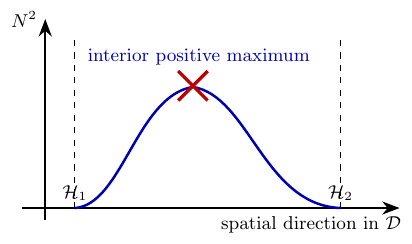}
            \put(-2,49){
                \makebox(0,0)[lt]{\textbf{(\thesubfigure)}}
            }
        \end{overpic}
    \end{subfigure}
    \caption{(a) A hypothetical stationary pocket $\mathcal{V}$ (yellow) bounded by two consecutive inner Killing horizons $\mathcal{H}_1$ and $\mathcal{H}_2$ on a connected interior branch. Its spatial section $\mathcal{D} = \Sigma_t \cap \mathcal{V}$ is compact. (b) Under the SEC, $N^2$ is subharmonic in the pocket $\mathcal{V}$; the maximum principle forces $N^2 \le 0$ throughout $\mathcal{D}$, contradicting the required $N^2>0$ in $\mathcal{V}$.}
\end{figure*}

\textbf{Raychaudhuri equation and focusing mechanism.--}
The Raychaudhuri equation encodes the attractive nature of ordinary matter under gravity by governing the focusing of geodesic congruences. To analyze the key feature of the pocket $\mathcal V$, let us first review the basis of Raychaudhuri equation in general setup. Consider a general timelike congruence, which is initially hypersurface-orthogonal to a co-dimensional one hypersurface $\Sigma_{t}$. Denote $h_{\mu\nu}$ to be the induced metric of $\Sigma$, $\nabla_\mu$ to be the spacetime covariant derivative, and $n^\mu$ to be future-directed unit normal of $\Sigma$. 
The Raychaudhuri equation in this setup reads~\cite{Wald:1984rg,Abreu_2011}
\begin{equation}
n^\mu\nabla_\mu\theta
=-\frac{1}{d}\theta^2-\sigma_{\mu\nu}\sigma^{\mu\nu}
-R_{\mu\nu}n^\mu n^\nu+\nabla_\mu a^\mu ,
\label{eq:raychaudhuri}
\end{equation}
where $d$ is the number of spatial dimensions, $\theta=\nabla_\mu n^\mu$ is the expansion, $a^\mu=n^\nu\nabla_\nu n^\mu$ is the acceleration, and $\sigma_{\mu\nu}=h_\mu{}^\alpha h_\nu{}^\beta\nabla_{(\alpha}n_{\beta)}
-\frac{1}{d}\theta h_{\mu\nu}$
is the shear tensor. The shear tensor satisfies $\sigma_{\mu\nu}\sigma^{\mu\nu}\geq 0$. 

Consider the trace-reversed form of Einstein’s equation $R_{\mu\nu}=T_{\mu\nu}-\frac{1}{d-1}Tg_{\mu\nu}$ for which we have set $8\pi G=1$ and have absorbed the cosmological constant contribution into the energy momentum tensor $T_{\mu\nu}$. The right-hand side is precisely the combination constrained by the SEC, which requires $\left(T_{\mu\nu}-\frac{1}{d-1}Tg_{\mu\nu}\right)v^{\mu}v^{\nu}\ge0$ for any timelike vector $v^{\mu}$. Since the normal vector $n^{\mu}$ is timelike, the SEC implies $R_{\mu\nu}n^{\mu}n^{\nu}\ge0$.
Therefore, both the shear term and the Ricci term decrease the expansion and therefore promote focusing. The Raychaudhuri equation~\eqref{eq:raychaudhuri} thus captures the physical fact that ordinary matter is attractive under gravity: for a hypersurface-orthogonal geodesic congruence (i.e., with vanishing acceleration $a^{\mu}=0$), the SEC guarantees that an initially negative expansion diverges to negative infinity within finite proper time, signalling the formation of a focal or conjugate point. For null congruences, the NEC plays an analogous role through the null Raychaudhuri equation. When combined with the additional global causal and trapping assumptions of the singularity theorems, this local focusing mechanism leads to timelike or null geodesic incompleteness~\cite{Penrose:1964wq,Hawking:1966sx,Hawking:1970zqf}. Thus, through the focusing encoded in the Raychaudhuri equation, the attractive nature of ordinary matter---quantified by the SEC for timelike and the NEC for null congruences---promotes both the formation of singularities and, via cosmic censorship, the appearance of horizons.

\textbf{Maximum principle and at most one inner Killing horizon.--}The Raychaudhuri equation, together with the SEC, drives focusing and horizon formation. We now show that this same focusing mechanism also forbids the appearance of the second inner Killing horizon.

Instead of applying Eq.~\eqref{eq:raychaudhuri} to timelike geodesic congruence, we now apply it to the timelike curves generated by unit normal covector $n_\mu=-N\nabla_\mu t$ of constant-$t$ hypersurface $\Sigma_t$ for stationary spacetime~\eqref{metric}. The corresponding unit normal vector reads
\begin{align}    
n^{\mu}=N^{-1}\left(\xi^{\mu}_{(t)}+\omega\xi_{(\phi)}^\mu\right).
\end{align}
Using the Killing identities $\nabla_{\mu}\xi^{\mu}_{(t)}=0$ and $\nabla_{\mu}\xi^{\mu}_{(\phi)}=0$, along with the metric \eqref{metric}, we obtain $\theta=\nabla_{\mu}n^{\mu}=0$. It should be emphasized that while this congruence is hypersurface-orthogonal, it is not geodesic. Furthermore, the acceleration vector satisfies $a_\mu=\partial_\mu\ln N$, and its divergence is $\nabla_\mu a^\mu=N^{-1}D^2N$, where $D^2:=D_a D^a$ and $D_a$ is the induced covariant derivative on the hypersurface $\Sigma_t$. Here and throughout, lowercase Latin indices $a,b=1,\dots,d$ label the $d$-dimensional spatial coordinates on the constant-$t$ hypersurfaces $\Sigma_{t}$. Substituting these relations into Eq.~\eqref{eq:raychaudhuri} yields
\begin{equation}
\D^2N=N\left(\sigma_{\mu\nu}\sigma^{\mu\nu}+R_{\mu\nu}n^\mu n^\nu\right).
\label{eq:lapse}
\end{equation}
The SEC then ensures
\begin{equation}
\D^2N^2=2N^2\left(\sigma_{\mu\nu}\sigma^{\mu\nu}+R_{\mu\nu}n^\mu n^\nu\right)+2(\D_a N)(\D^a N)\ge 0.
\label{eq:subharmonic}
\end{equation}
Thus $N^2$ is subharmonic on the compact domain $\mathcal D$ (the green region of Fig.~\ref{fig:compact-pocket-single}). By the maximum principle, any subharmonic function attains its maximum on the boundary. Hence $\max_{\mathcal D}N^2=\max_{\partial\mathcal D}N^2=0$, since $N^2=0$ on both boundary components. This contradicts the requirement $N^2>0$ in the interior of a stationary pocket~\eqref{eq:pocket} (see Fig.~\ref{fig:maximum principle}). Therefore the SEC and Raychaudhuri equation exclude a compact stationary pocket. Thus the same focusing mechanism—attractive nature quantified by the SEC via the Raychaudhuri equation—that drives gravitational collapse and horizon formation also forbids a second inner horizon.

\textbf{Beyond compact horizon sections.--} 
In holographic duality, noncompact horizon geometries are of particular interest. Planar AdS black holes constitute the standard framework for describing strongly coupled quantum systems via holography~\cite{hartnoll2018holographicquantummatter,Cai:2015cya,Liu:2020rrn,Baggioli:2021xuv}. Hyperbolic horizons, meanwhile, play a fundamental role in topological black hole thermodynamics and holographic entanglement, rather than arising as incidental special solutions~\cite{Birmingham_1999,Casini_2011,Hung_2011}. For such cases, the maximum-principle argument employed previously is no longer applicable. Furthermore, in holographic applications the NEC serves as the relevant constraint: it controls bulk causal structure~\cite{Gao_2000} and underpins the strong subadditivity of holographic entanglement entropy~\cite{Callan_2012,Wall_2014}. Nevertheless, the underlying physical mechanism remains the same: the attractive nature of ordinary matter, quantified by the NEC, prevents the existence of a stationary pocket.  

Assume that two nondegenerate inner Killing horizons bound a stationary pocket. 
Let a smooth function $z$ foliate $\mathcal D$ by nested, noncompact leaves $\mathcal S_z:=\{z=\mathrm{const.}\}$ (see Fig~\ref{fig}), with $\mathcal S_{z_1}$ and $\mathcal S_{z_2}$ approaching the two inner horizon sections located at $z=z_1$ and $z=z_2>z_1$, respectively. 
Focusing on applications in holography rather than aimlessly mathematical generalizations, we here restrict ourselves to considering the case that horizons have planar $\mathbb{R}^{d-1}$ or hyperbolic $\mathbb{H}^{d-1}$ topology (may not have planar or hyperbolic symmetry). The topology of $\mathcal{D}$ is just the simple direct product of $\mathbb{R}$ and topology of the horizon. In addition, in order to ensure well-controlled asymptotic behavior of $\mathcal{D}$ along the noncompact directions, we assume that the deviation from planar or hyperbolic symmetry only occurs in some compact supports of $\mathcal{D}$ and so $\mathcal{D}$ has asymptotically planar or hyperbolic symmetry along noncompact directions. This could be regarded as the generalization of planar or hyperbolic AdS black holes in non-homogeneous systems and has been widely considered in holographic models.

Let $s^a$ be the unit normal to $\mathcal S_z$ within $\Sigma_t$, oriented by $s^aD_a z>0$, let $\gamma_{IJ}$ be the induced metric on $\mathcal{S}_z$ and $\mathcal K=\gamma^{IJ}\mathcal{K}_{IJ}$ be the trace of the extrinsic curvature of $\mathcal S_z\subset\Sigma_t$. We take the compact subset $\mathfrak{B}_{z,L}\subset \mathfrak{B}_z$ such that $\mathfrak{B}_{z,L}= \mathfrak{B}_{z}$ as $L\rightarrow\infty$, and define
\begin{equation}
  \begin{aligned}
    F(z):&=\lim_{L\rightarrow\infty}
    \frac{1}{\mathcal{A}(L)}\int_{\mathfrak B_{z,L}}
    \left[s^a\partial_aN-\frac{\mathcal K N}{d-1}\right]\mathrm dS  \\
    &=\Big\langle s^a\partial_aN-\frac{\mathcal K N}{d-1}\Big \rangle_z,
  \end{aligned}
\label{F}
\end{equation}
where $\mathcal{A}(L):=\int_{\mathfrak B_{z_1,L}}\mathrm dS$ is area of $\mathfrak B_{z_1,L}$ with a fixed $z=z_1$ and we write $\langle X\rangle_z:=\lim_{L\to\infty}\mathcal A^{-1}(L)\int_{\mathfrak B_{z,L}}X\,\mathrm dS$ as the average density of $X$ on $\mathfrak B_{z,L}$. Note we used $z$-independent ``area'' $\mathcal{A}(L)$ to obtain the average density. It is introduced to regularize the divergent integrals on noncompact surfaces $\mathfrak{B}_{z}$.
%
%
%
Some variants of function $F(z)$ appeared in the no-inner-horizon theorems~\cite{Hartnoll:2020fhc,Cai:2020wrp,An:2021plu,Cai:2021obq}, as part of the radial conserved charge. We will show that this quantity serves as a monotonic ‘charge’ along the flow.

We note that Eq.~\eqref{eq:lapse-horizon} does not depend on the topology of the horizon section, because it relies only on the null property of the horizon. Since $N=0$ on both horizons, the $\mathcal K N$ term vanishes there, and $s^a\partial_aN$ becomes the leading contribution $F(z)\sim \langle s^a\partial_a N\rangle_{z}$. Note that $F(z_2)$ and $F(z_1)$ are both independent of the choice of foliation, though the other values of $F(z)$ depend on foliation. Since $N$ is positive between $z_1$ and $z_2$, the chosen orientation implies that $s^a\partial_aN>0$ on the first horizon (where $N$ increases into the pocket) and $s^a\partial_aN<0$ on the second horizon (where $N$ decreases toward the boundary). Thus $F(z_1)>0$ while $F(z_2)<0$. In the following, 
we will use NEC to show $F(z_1)\leq F(z_2)$, which gives a contradiction. Hence the noncompact stationary pocket cannot exist.


\begin{figure}
	\includegraphics[width=7cm]{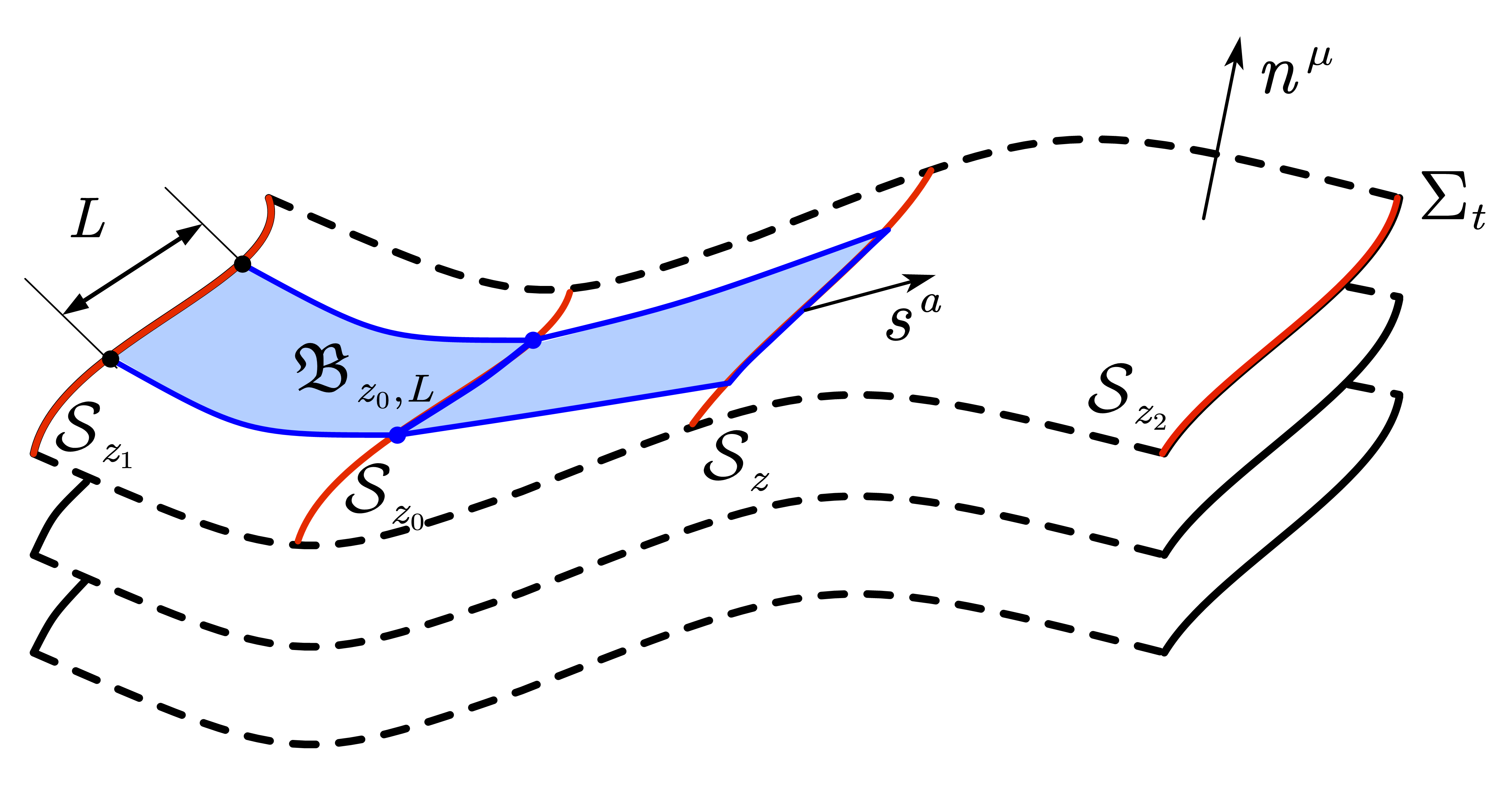}
    \captionsetup{
    justification=raggedright,
    singlelinecheck=false
    }
	\caption{
    Foliation of the noncompact spatial section $\mathcal{D}$ by leaves $\mathcal{S}_z$ (red) between the two inner horizons at $z=z_1$ and $z=z_2$. The compact cutoff $\mathfrak{B}_{z,L}$ (blue) defines the averaged flux $F(z)$ in Eq.~\eqref{F}; its $L\to\infty$ limit yields a monotonic quantity that contradicts the boundary signs required for a stationary pocket.}
	\label{fig}
\end{figure}

To establish this, we now use the Gauss--Codazzi relation and the Einstein constraint equations to obtain the derivative of $F(z)$ (see SM Sec.\,S2 for details):
\begin{equation}
\begin{aligned}
F'(z) &=\frac{1}{d-1} \Big\langle  N\varphi(\varpi - \mathfrak{R}) + \mathfrak{D}^2(N\varphi)\\
&\qquad - 2\gamma^{IJ}(\partial_I\varphi)(\partial_J N)\Big \rangle_z\\
&+\frac{1}{d-1}\Big\langle\frac{\Psi^2\varphi}{N}\left[\frac{d}{2}(s^a \partial_{a}\omega)^2+\frac{d-1}{2}(\mathfrak{D}\omega)^2 \right]\Big \rangle_z\, ,
\label{eq:fluxidentity}
\end{aligned}
\end{equation}
where $\varphi$ is the lapse function of the foliation $\{\mathcal{S}_z\}$, the $N$,$\Psi$ and $\omega$ are defined in metric~\eqref{metric}, the $\mathfrak D$ and $\mathfrak R$ are covariant derivative and scalar curvature of $\mathcal S_z$, $\varpi$ denotes the matter combination
\begin{equation}
\varpi=\frac{1}{2}\sum_{I=1}^{d-1}\left(T_{\mu\nu}\ell_{I,+}^\mu\ell_{I,+}^\nu+T_{\mu\nu}\ell_{I,-}^\mu\ell_{I,-}^\nu\right),
\label{eq:varpi}
\end{equation}
in which $\ell_{I,\pm}^\mu=n^\mu\pm e_I^\mu$. Here $\{e_I^\mu\}$ is unit orthogonal basis on $\mathcal{S}_z$, so that $\{s^\mu,e_1^\mu,e_2^\mu\,\dots,e_{d-1}^\mu\}$ constitute the spatial unit basis in the orthogonal decomposition of $\Sigma_t$.  For each $I$, $n^\mu$ is unit timelike, $e_I^\mu$ is unit spacelike, and they are orthogonal, so $\ell_{I,+}^\mu$ and $\ell_{I,-}^\mu$ are null vectors. The NEC directly gives $\varpi\ge0$ because $T_{\mu\nu}\ell^\mu\ell^\nu\ge0$ for every null vector $\ell^\mu$.

To determine the sign of Eq.~\eqref{eq:fluxidentity}, we need to consider the remaining terms, which involve the lapse function $\varphi$. We choose the gauge $\varphi N=1$ and call foliation under such gauge as ``inverse lapse flow.'' This is an analogue of Geroch's inverse mean curvature flow $\varphi \mathcal K=1$~\cite{Jang:1977kef,https://doi.org/10.1111/j.1749-6632.1973.tb41445.x,huisken2001inverse}.
Here we adopt the similar strategy but with the lapse $N$ playing the role of $\mathcal K$. To our knowledge, this is the first use of the inverse lapse function as the generator of a normal flow in the context of black hole physics. Under the inverse lapse flow $\varphi N=1$, Eq.~\eqref{eq:fluxidentity} becomes 
\begin{equation}\label{eq:Fprimegauge}
\begin{aligned}
F'(z)&=\frac{1}{d-1}\Big\langle
\varpi-\mathfrak R+2N^{-2}\gamma^{IJ}(\partial_I N)(\partial_J N)
\Big\rangle_z \\
&+\frac{1}{d-1}\Big\langle\frac{\Psi^2}{N^2}
\Big[\frac{d}{2}(s^a\partial_a\omega)^2 
+\frac{d-1}{2}(\mathfrak D\omega)^2\Big]\Big\rangle_z\,.
\end{aligned}
\end{equation}
For the noncompact case, two typical examples  planar $(k=0)$ and hyperbolic $(k=-1)$ symmetric black holes. In these two cases, $\mathfrak{R}=(d-1)(d-2)kz^2$, which gives $\langle \mathfrak{R}\rangle_z\le0$. 
Without above symmetries, one can show that $\langle \mathfrak{R}\rangle_z \le 0$ still holds for the case considered here when $d \le 3$ due to the topology of horizons. For $d >3$, the value of $\langle \mathfrak{R}\rangle_z$ is not determined solely by the topology. However, for the physically relevant class that $\mathcal{D}$ has asympototical planar or hyperbolic symmetry along noncompact directions, any compact regions of positive curvature contribute subdominantly to the infinite-volume average. Consequently, $\langle \mathfrak{R}\rangle_z \le 0$ holds (See SM Sec.\,S2.2).
%
%
Combining with $\varpi\ge0$ required by the NEC, all terms in right-hand of Eq.~\eqref{eq:Fprimegauge} are nonnegative. Hence $F'(z)\ge0$ throughout the pocket under the foliation of inverse lapse flow. 

If above inverse lapse flow can give a smooth foliation on the pocket, we then find $F(z_2)\geq F(z_1)$.  Since the values of $F(z_1)$ and $F(z_2)$ are independent of the choice of foliation, we then find $F(z_2)\geq F(z_1)$ will be true for all foliations, contradicting $F(z_2)< F(z_1)$ required by existence of pocket. The noncompact stationary pocket is therefore excluded for this class. Like the inverse mean curvature flow, the inverse lapse flow also suffers from divergency at horizon (since $N=0$ at horizon and so $\varphi N=1$ losing validity) and 
may also develop singularities in the intermediate process. In addition, it may suffer from overdetermined problem due to the fact that the flow starts from one horizon and is required to reach the other horizon simultaneously. These key issues are all handled by the viscosity-solution framework for the eikonal equation~\cite{evans2022partial} (see SM Sec.\,S3 for details) and so the conclusion generally holds. The attractive nature of ordinary matter, as encoded in the NEC, plays an essential role in this argument: it ensures $\varpi \ge 0$, which, together with $\langle\mathfrak{R}\rangle_z \le 0$, renders $F'(z)\ge 0$ and hence makes the monotonicity argument possible.


\textbf{Discussion.--}
We have shown that ordinary matter with its attractive nature under gravity encoded in the classical energy conditions not only guarantees the formation of horizons but also limits the number of inner horizons in the interior of a stationary black hole.
This extends the static results of~\cite{Yang:2021civ}—which relied crucially on the diagonal form of the metric—to the more physically relevant stationary case. More importantly, it uncovers that the Raychaudhuri equation plays an essential role both in formation of event horizon and restriction on inner horizons.

Our result is purely geometric and relies only on the stationarity of the interior region. It applies to the physically most relevant class of black holes, namely stationary and axisymmetric ones, independent of the matter content. Inside the black hole, quantum effects of matter and gravity may lead to violations of energy conditions and Einstein gravity. But such violations are expected in strong-field and small spacetime regions. The theorem remains robust as long as the horizon scale is well above the quantum-gravity regime, where classical energy conditions and Einstein gravity hold. Though we here consider the stationary case, the results also raise interesting inspirations on dynamical process of black hole formation. While multiple horizons may appear in dynamical collapse~\cite{cao2016multihorizoncriticalbehaviorgravitational}, our result implies that, if they do, they must merge or be destroyed before the system settles into a stationary final state. Our proof for noncompact case needs an additional asymptotic symmetry when $d>3$. Extending the noncompact proof to fully general asymptotics remains a mathematically open question, though no physical counterexample is currently known.

Our findings open several avenues for future work. First, it would be instructive to determine whether analogous horizon-capacity bounds survive when quantum effects or higher-derivative corrections become important~\cite{duan2026kasnerepochsorderedoscillations}.
Second, the radial monotonicity of $F(z)$ also holds outside the event horizon, and so bears a striking formal resemblance to holographic c-functions: its monotonic evolution along the radial direction may encode the loss of degrees of freedom in a dual renormalization-group flow, a connection worth exploring in explicit AdS/CFT setups. Finally, the remaining single inner Killing horizon—when present—constitutes a Cauchy horizon. Our theorem shows that the energy conditions alone cannot eliminate it; whether additional, physically motivated restrictions on the matter could rule it out remains a decisive question, as its removal would provide a direct geometric underpinning for strong cosmic censorship.


%
\textbf{Acknowledgement.--}
This work is supported by the National Natural Science Foundation of China Grants No.\,12525503, No.\, 12375051, No.\,12588101 and No.\,12447101, and by the Tianjin University Self-Innovation Fund Extreme Basic Research Project Grant No.\, 2025XJ22-0014 and No.\, 2025XJ21-0007.

\makeatletter
\renewcommand{\bibsection}{%
  \begingroup
    \let\addcontentsline\@gobblethree
    \section*{\refname}%
  \endgroup
  \@nobreaktrue
}
\makeatother

\bibliography{prl_references}

\clearpage

\cleardoublepage

\setcounter{equation}{0}
\renewcommand{\theequation}{S\arabic{equation}}
\setcounter{figure}{0}
\renewcommand{\thefigure}{S\arabic{figure}}
\setcounter{section}{0}
\renewcommand{\thesection}{S\arabic{section}}
\renewcommand{\thesubsection}{\thesection.\arabic{subsection}}

\begin{center}
\textbf{\Large Supplemental Material}
\end{center}

\noindent In this Supplemental Material, we provide technical details supporting the results reported in the main Letter.

\tableofcontents

\section{Killing horizon in stationary axisymmetric spacetime}\label{SeqLiou}

In a stationary axisymmetric spacetime, there exist two linearly independent Killing vector fields: a Killing vector field $\xi_{(t)}^{\mu}$, which is timelike at infinity, representing time-translation symmetry, and a spacelike Killing vector $\xi_{(\phi)}^{\mu}$, representing rotational symmetry, whose integral curves are closed. Furthermore, these two Killing vectors commute with each other, i.e., $[\xi_{(t)},\xi_{(\phi)}]=0$. Typically, we choose the integral curves of $\xi_{(t)}^{\mu}$ and $\xi_{(\phi)}^{\mu}$ as the $t$ and $\phi$ coordinate axes, respectively. Assuming the spacetime is ``$t-\phi$" reflection symmetric, i.e., the metric remains invariant under the transformation $(t,\phi)\rightarrow(-t,-\phi)$, then under these assumptions, a general stationary axisymmetric metric can be written in the form: 
\begin{align} 
    ds^2 = -N^2\text{d}t^2 + \Psi^2 (\text{d}\phi - \omega \text{d}t)^2 + q_{AB} \text{d}x^{A} \text{d}x^{B}, \label{metricAd}
\end{align}
where $g_{\phi\phi} = \Psi^2 > 0$ and $A, B = 1, 2, ..., d-1$. And $N$ is the lapse function and $\beta^{\mu} = -\omega (\partial / \partial \phi)^\mu$ is the shift vector. Next we ask what a Killing horizon implies in such a spacetime. In the following, we prove a theorem.
\begin{theorem}
    In a stationary axisymmetric spacetime, a hypersurface $\mathcal{H}$ is a Killing horizon if and only if lapse function $N^2$ vanishes on $\mathcal{H}$.
\end{theorem}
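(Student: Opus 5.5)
The plan is to prove the two directions separately, working throughout with the metric \eqref{metricAd} and the co-rotating field $\chi^\mu=\xi_{(t)}^\mu+\omega\xi_{(\phi)}^\mu$. A direct computation from \eqref{metricAd} gives $g_{\mu\nu}\chi^\mu\chi^\nu=-N^2$. It also gives $g_{\mu\nu}\chi^\mu\xi_{(\phi)}^\nu=g_{t\phi}+\omega g_{\phi\phi}=0$, so $\chi^\mu$ is orthogonal to $\xi_{(\phi)}^\mu$ everywhere.

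\emph{Necessity.} Suppose $\mathcal H$ is a Killing horizon. Its null generator is a Killing field in the span of $\xi_{(t)}$ and $\xi_{(\phi)}$, namely $\chi_{\mathcal H}^\mu=\xi_{(t)}^\mu+\Omega_{\mathcal H}\xi_{(\phi)}^\mu$ with $\Omega_{\mathcal H}$ constant. Since $\xi_{(\phi)}$ is tangent to $\mathcal H$ and $\chi_{\mathcal H}$ is normal to the null hypersurface, $g(\chi_{\mathcal H},\xi_{(\phi)})=0$. Expanding this gives $g_{t\phi}+\Omega_{\mathcal H}\Psi^2=0$, so $\Omega_{\mathcal H}=\omega|_{\mathcal H}$ because $\Psi^2>0$. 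Hence $\chi_{\mathcal H}=\chi$ on $\mathcal H$, and $N^2|_{\mathcal H}=-g(\chi,\chi)|_{\mathcal H}=0$.

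\emph{Sufficiency.} Take a connected component $\mathcal H$ of $\{N^2=0\}$, assumed to be a smooth hypersurface. There $\chi$ is null and nonzero, since its $t$-component is $1$. Both Killing fields are tangent to $\mathcal H$ by invariance of $N^2$. The vector $\chi$ is orthogonal to $\xi_{(\phi)}$ and to itself, hence to all of $\mathrm{span}\{\xi_{(t)},\xi_{(\phi)}\}$. I would next show that $\chi$ is normal to $\mathcal H$. Restricted to the orbit plane, the metric $\begin{pmatrix}-N^2+\Psi^2\omega^2 & -\Psi^2\omega\\ -\Psi^2\omega & \Psi^2\end{pmatrix}$ has determinant $-N^2\Psi^2=0$ on $\mathcal H$. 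The plane is therefore degenerate with null direction $\chi$. The block-orthogonality in \eqref{metricAd} between the $(t,\phi)$ block and the $x^A$ block shows that $\chi$ is orthogonal to every vector of the form $\partial_A$. So $\chi$ is orthogonal to every tangent vector of $\mathcal H$, which makes $\mathcal H$ a null hypersurface with normal $\chi$.

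The remaining step, which I expect to be the main obstacle, is proving that $\omega$ is constant on $\mathcal H$. Without it, $\chi$ is merely a null normal with a variable coefficient, not a Killing field. Here I would invoke the weak rigidity theorem of Carter. The planned argument runs as follows. Both $\xi_{(t)}$ and $\xi_{(\phi)}$ are tangent to the null surface, and the normal there is $\chi$. The quantities $g(\xi_{(\phi)},\xi_{(\phi)})=\Psi^2$ and the twist combinations satisfy identities obtained from $\nabla_\mu(\xi_{(t)}\wedge\xi_{(\phi)})$. Combined with the commutation $[\xi_{(t)},\xi_{(\phi)}]=0$, these give $\partial_a\omega\propto \chi_a$ on $\mathcal H$. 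Concretely, I would differentiate $g(\chi,\xi_{(\phi)})=0$ along tangent directions $v$ of $\mathcal H$ and use Killing's equation together with the vanishing of the twist $\xi_{(\phi)}\cdot\nabla\chi$ on the degenerate orbit plane. This should yield $\Psi^2\, v^\mu\partial_\mu\omega=0$, so $\omega$ is constant along $\mathcal H$. The $t$–$\phi$ reflection symmetry (orthogonal transitivity) is what makes the twist terms vanish.

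Once $\omega=\Omega_{\mathcal H}$ is constant, $\chi_{\mathcal H}=\xi_{(t)}+\Omega_{\mathcal H}\xi_{(\phi)}$ is a genuine Killing field. It coincides with the null normal of $\mathcal H$, so $\mathcal H$ is a Killing horizon, which completes both directions.
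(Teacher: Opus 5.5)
Your proposal follows the paper's proof essentially step by step. For necessity, you use the orthogonality of the null generator to the tangent Killing fields: you solve $g(\chi_{\mathcal H},\xi_{(\phi)})=0$ for $\Omega_{\mathcal H}=\omega|_{\mathcal H}$ and then use nullity, whereas the paper eliminates $\Omega_H$ to obtain $g_{tt}g_{\phi\phi}-g_{t\phi}^2=0$. For sufficiency, you exhibit $\chi$ as a null normal and then invoke Carter's weak rigidity theorem for the constancy of $\omega$, as the paper does.

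One caution: your heuristic sketch of that theorem is circular as written. The identity $\xi_{(\phi)}^\nu v^\mu\nabla_\mu\chi_\nu=\tfrac{1}{2}\Psi^2 v^\mu\partial_\mu\omega$ holds everywhere, and it is all that differentiating $g(\chi,\xi_{(\phi)})=0$ with Killing's equation yields. So the ``vanishing twist'' you want is the conclusion itself, and orthogonal transitivity does not force it at a point; that step must rest on the citation, exactly as in the paper.
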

\textit{Proof:} We first prove the necessity of the theorem. If $\mathcal{H}$ is a Killing horizon, there exists a constant $\Omega_H$ such that the Killing vector associated with $\mathcal{H}$ can be written as $\chi^\mu = \xi_{(t)}^{\mu} + \Omega_H \xi_{(\phi)}^\mu$ (i.e., $\chi^\mu$ is the generator of $\mathcal{H}$). Because $\xi^\mu_{(t)}$ and $\xi^\mu_{(\phi)}$ are killing vector with $[\xi^\mu_{(t)},\xi^\mu_{(\phi)}]=0$, we have
\begin{equation}
    \mathcal{L}_Y(\chi^\mu\chi_\mu)=(\mathcal{L}_Y g_{\mu\nu})\chi^\mu\chi^\nu+2g_{\mu\nu}(\mathcal{L}_Y\chi)^\mu\chi^\nu=0\,,
\end{equation}
where $\mathcal{L}$ is Lie derivative, and $Y^\mu\in\{\xi^\mu_{(t)},\xi^\mu_{(\phi)}\}$, which means $\mathcal{L}_Y$ induces a single-parameter isometric group action $\sigma_s^Y(\mathcal{H})=\mathcal{H}$, and differentiate at point $s=0$ gives:
\begin{equation}
    \{\xi^\mu_{(t)}|_\mathcal{H},\xi^\mu_{(\phi)}|_\mathcal{H}\}\in T\mathcal{H}\,,
\end{equation}
where $T\mathcal{H}$ is the tangent space of $\mathcal{H}$. Since $\chi^\mu$ is normal to $\mathcal{H}$ and $\xi_{(t)}^{\mu},\xi_{(\phi)}^\mu$ are tangent on $\mathcal{H}$, we have
\begin{align}
    &\chi_\mu \xi_{(t)}^{\mu} = \chi_\mu \xi_{(\phi)}^\mu = 0\notag \\ \Rightarrow \quad &
    \xi_{(t)}^{2} + \Omega_H \xi_{\mu(\phi)}\xi_{(t)}^{\mu} = \xi_{\mu(t)}\xi_{(\phi)}^{\mu} + \Omega_H \xi^2_{(\phi)}=0\,.
\end{align}
Eliminating $\Omega_{H}$, we therefore obtain
\begin{align}
&\xi_{(t)}^{2} \xi_{(\phi)}^2|_{\mathcal{H}} = (\xi_{\mu(\phi)}\xi_{(t)}^{\mu})^2|_{\mathcal{H}} = 0\notag \\  \Rightarrow \quad & (g_{tt} g_{\phi\phi} - g_{t\phi}^2)|_{\mathcal{H}} = 0. \label{det}
\end{align}
Using the metric ansatz~(\ref{metricAd}), the lapse function satisfies
\begin{align}\label{N2}
-N^2 = \frac{g_{tt}g_{\phi\phi} - g_{t\phi}^2}{g_{\phi\phi}}.
\end{align}
Since $\xi_{(\phi)}^\mu$ is spacelike, $g_{\phi\phi}>0$. Therefore, when $\mathcal{H}$ is a Killing horizon, Eq.~\eqref{det} implies $N^2|_{\mathcal{H}}=0$.

Next, we prove the sufficiency of the theorem. From Eq.~\eqref{N2}, it can be seen that  $N^2$ is the norm of the co-rotating vector $\chi^{\mu} = \xi_{(t)}^{\mu} + \omega \xi_{(\phi)}^{\mu}$, (i.e. $-N^2 = \chi^{\mu} \chi_{\mu}$). Due to the symmetry of the metric, the Killing vectors $\xi_{(t)}^{\mu}$ and $\xi_{(\phi)}^{\mu}$ are necessarily tangent to $\mathcal{H}$. Therefore, the coordinates $t$ and $\phi$ also lie on $\mathcal{H}$. To further analyze the properties of this hypersurface, we define a coordinate system $\{t, \phi, y^{\tilde A}, \tilde A = 1, 2, ..., d-2 \}$ on the hypersurface $\mathcal{H}$. The new coordinate basis vectors in the original coordinate representation are:
\begin{align}
    \left( \frac{\partial}{\partial y^{\tilde A}} \right)^{\mu} = {C_{\tilde A}}^B \left( \frac{\partial}{\partial x^{B}} \right)^{\mu}.
\end{align}
Since $g_{\mu\nu}\xi_{(t)}^{\mu} (\partial / \partial x^A)^\nu =g_{\mu\nu} \xi_{(\phi)}^{\mu} (\partial / \partial x^A)^\nu = 0$, it follows that $\chi_\mu (\partial / \partial y^{\tilde A})^\mu = 0$. Meanwhile, from Eq.~\eqref{det}, $N^2=0$ implies $\chi_{\mu}\xi^{\mu}_{(t)}=0$ and $\chi_{\mu}\xi^{\mu}_{(\phi)}=0$. This means the vector field $\chi_\mu$ is both orthogonal to and tangent to $\mathcal{H}$. Therefore, $\mathcal{H}$ is a null hypersurface. To prove that $\mathcal{H}$ is a Killing horizon, we only need to show that the vector field $\chi_{\mu}$ is a Killing vector. Here, we refer a theorem
 (the Theorem 8.3 (Weak rigidity theorem) in Ref.~\cite{Straumann:2013spu}):
\begin{lemma}
    Let $(\mathcal{M},g)$ be a circular spacetime with commuting Killing vector fields $\xi_{(t)}^\mu$ and $\xi_{(\phi)}^\mu$, and let $\omega = -\xi^\mu \Psi_\mu / \Psi^2$ and $\chi^\mu = \xi_{(t)}^\mu + \omega \xi_{(\phi)}^\mu$. Then the hypersurface $\mathcal{H}$ on which $\chi_\mu \chi^\mu = 0$ (assumed to exist) is null, and $\omega$ stays constant on $\mathcal{H}$.
\end{lemma}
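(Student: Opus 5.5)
The plan is to work with invariants of the $2\times2$ Gram matrix of the Killing vectors rather than the coordinates $x^A$, because the ansatz~\eqref{metricAd} degenerates on $\mathcal H$. There $\chi_\mu$ has vanishing components along $\xi_{(t)},\xi_{(\phi)},\partial_A$ while $\chi^\mu\neq0$. Set
\begin{equation}
\rho^2:=\Psi^2N^2=g_{t\phi}^2-g_{tt}g_{\phi\phi},
\end{equation}
so that $\mathcal H=\{\rho^2=0\}$, since $\Psi^2>0$. Circularity means the $2$-planes orthogonal to the Killing orbits are integrable, i.e. $\xi_{(t)}\wedge\xi_{(\phi)}\wedge \mathrm d\xi_{(t)}=\xi_{(t)}\wedge\xi_{(\phi)}\wedge \mathrm d\xi_{(\phi)}=0$. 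This, together with the Killing equations, is all I intend to use.

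\emph{Nullness.} By the definition of $\omega$ we have $\chi_\mu\xi_{(\phi)}^\mu=g_{t\phi}+\omega g_{\phi\phi}=0$ identically. Hence $\chi\cdot\chi=\chi_\mu\xi^\mu_{(t)}=-N^2$, and both contractions vanish on $\mathcal H$. Since $\xi_{(t)},\xi_{(\phi)}$ are tangent to $\mathcal H$ (they preserve $\rho^2$), $\chi$ is a null vector tangent to $\mathcal H$. It is also orthogonal to the orbit-orthogonal $2$-planes, which by circularity are spanned by vectors orthogonal to the orbits. Counting dimensions, $\chi^\perp\supset T\mathcal H$ at each point. So $\chi$ is normal to $\mathcal H$ and $\mathcal H$ is null, with $\chi_\mu=f\,\nabla_\mu\rho^2$ on $\mathcal H$ for some function $f$.

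\emph{Constancy of $\omega$.} Differentiating $\chi^\nu=\xi^\nu_{(t)}+\omega\xi^\nu_{(\phi)}$ and using the Killing equations gives
\begin{equation}
\nabla_{(\mu}\chi_{\nu)}=\xi_{(\phi)(\nu}\partial_{\mu)}\omega .
\end{equation}
Differentiating the identity $\chi_\nu\xi^\nu_{(\phi)}=0$ gives $\xi_{(\phi)}^\nu\nabla_\mu\chi_\nu=-\chi^\nu\nabla_\mu\xi_{(\phi)\nu}$. Combining the two with $\chi\cdot\xi_{(\phi)}=0$ yields
\begin{equation}
\Psi^2\partial_\mu\omega=\xi^\nu_{(\phi)}\big(\nabla_\mu\chi_\nu+\nabla_\nu\chi_\mu\big)
=-\chi^\nu\nabla_\mu\xi_{(\phi)\nu}+\xi^\nu_{(\phi)}\nabla_\nu\chi_\mu .
\end{equation}
It then suffices to show that $X^\mu\partial_\mu\omega=0$ for every $X$ tangent to $\mathcal H$. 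Along $\xi_{(t)},\xi_{(\phi)}$ this is automatic, since $\omega$ is invariant. The remaining tangent directions lie in the orbit-orthogonal plane; the ones that are not along $\chi$ I treat with circularity, and the null direction $\chi$ itself with the Frobenius condition. I evaluate the right-hand side on such $X$ using two facts. First, $\chi$ is hypersurface orthogonal on $\mathcal H$, so $\chi_{[\mu}\nabla_\nu\chi_{\rho]}=0$ there, from $\chi_\mu\propto\nabla_\mu\rho^2$. Second, circularity gives $X^\mu\chi^\nu\nabla_\mu\xi_{(\phi)\nu}=0$ for orbit-orthogonal $X$, because $\chi$ lies in the orbit plane and $\mathrm d\xi_{(\phi)}$ has no mixed orbit/orthogonal-plane components. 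The second term should reduce to a multiple of $\chi_\mu X^\mu=0$.

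The main obstacle is precisely this last step. The coordinate-free Frobenius and circularity manipulations must be carried out at points where the Gram matrix is degenerate, so the decomposition of $T_p\mathcal M$ into orbit plane and orthogonal plane fails: the orbit plane is itself null and contains $\chi$. I would first try a continuity argument: establish the identity $\Psi^2\,\mathrm d\omega\wedge \mathrm d\rho^2\propto\rho^2(\cdots)$ off $\mathcal H$, where the decomposition is valid, and pass to the limit $\rho\to0$. If that fails, I would fall back on Carter's identity relating $\nabla\omega$ to the twist of $\xi_{(t)}$ in the orbit space.
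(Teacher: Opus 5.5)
The paper gives no proof of this lemma; it quotes Straumann's weak rigidity theorem. Your argument therefore has to stand on its own, and the constancy step fails independently of the degeneracy you worry about.

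From the Killing equations and $[\xi_{(t)},\xi_{(\phi)}]=0$ one has $\xi_{(t)}^\nu\nabla_\mu\xi_{(\phi)\nu}=\tfrac12\partial_\mu(\xi_{(t)}\cdot\xi_{(\phi)})$ and $\xi_{(\phi)}^\nu\nabla_\mu\xi_{(\phi)\nu}=\tfrac12\partial_\mu\Psi^2$. Hence, identically on the whole spacetime,
\[
\chi^\nu\nabla_\mu\xi_{(\phi)\nu}=-\tfrac12\Psi^2\partial_\mu\omega .
\]
So your expression for $\Psi^2\partial_\mu\omega$ only splits it into two terms, each equal to $\tfrac12\Psi^2\partial_\mu\omega$. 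Proving that either one vanishes on $X$ \emph{is} the claim.

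Your justification for the first term is also wrong. Circularity kills the components of $\mathrm d\xi_{(\phi)}$ with both legs orthogonal to the orbits, and invariance kills the orbit--orbit one. The mixed components are exactly what survives. If your claim held, it would give $X^\mu\partial_\mu\omega=0$ off $\mathcal H$ too, i.e.\ constant $\omega$ in Kerr.

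The Frobenius input is equally empty. Off $\mathcal H$, circularity gives $\chi^\flat=-N^2\,\mathrm dt$, so $\chi\wedge\mathrm d\chi=0$ everywhere, not just on $\mathcal H$. Every identity you invoke holds off $\mathcal H$, where $\omega$ is not constant, so no combination of them can succeed. The missing input is that $N^2\to0$ on $\mathcal H$ while the Killing data stay smooth.

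Your fallback points the right way, and the needed identity is elementary. Off $\mathcal H$ the block form gives $\chi^\flat=-N^2\,\mathrm dt$ and $\Psi^{-2}\xi^\flat_{(\phi)}=\mathrm d\phi-\omega\,\mathrm dt$, hence
\[
N^2\,\mathrm d\chi^\flat=\mathrm dN^2\wedge\chi^\flat,\qquad N^2\,\mathrm d\big(\Psi^{-2}\xi^\flat_{(\phi)}\big)=\mathrm d\omega\wedge\chi^\flat .
\]
All forms here are built from Killing inner products and are smooth wherever $\Psi^2>0$. Both identities therefore extend by continuity to $\mathcal H$, where $N^2=0$ gives $\mathrm dN^2\wedge\chi^\flat=0$ and $\mathrm d\omega\wedge\chi^\flat=0$. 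Multiplied by $\Psi^2$, the second reads $\Psi^2\,\mathrm d\omega\wedge\chi^\flat=\rho^2\,\mathrm d(\Psi^{-2}\xi^\flat_{(\phi)})$, which is your proposed identity with $\chi^\flat$ in place of $\mathrm d\rho^2$.

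The first identity says $\chi^\flat\propto\mathrm dN^2$ on $\mathcal H$ wherever $\mathrm dN^2\neq0$, so $\mathcal H$ is null with normal $\chi$. The second then gives $\mathrm d\omega|_{T\mathcal H}=0$.

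You need the first identity anyway, because your nullness step presupposes its conclusion. At $p\in\mathcal H$ the orbit plane and its orthogonal complement meet in $\mathrm{span}(\chi)$, so their sum is only the hyperplane $\chi^\perp$. Asserting that this sum equals $T_p\mathcal H$ is equivalent to $\nabla N^2$ being null, which is what you want to prove. The coordinate argument in the paper's SM makes the same tacit assumption.
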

Therefore, according to this theorem, we see that the scalar field $\omega$ is constant on $\mathcal{H}$. This indicates that the vector field $\chi_{\mu}$ is a Killing vector, and thus $\mathcal{H}$ is a Killing horizon. 

\section{Nonnegativity of $F'(z)$}\label{S2}
\subsection{Asymptotic symmetry along transverse directions}\label{Sec2.1}

Before we give the details on our discussions for noncompact case, let us first clarify the asymptotic assumption that underlies the definition of the averaged flux $F(z)$ and the sign of $\langle \mathfrak R\rangle_z$ in the noncompact case. As illustrated in Fig.~\ref{Asymptotic}, the geometry of the stationary pocket $\mathcal{D}$ involves two distinct directional hierarchies: the bulk radial direction runs from the AdS boundary toward the interior (along which the foliation parameter $z$ is defined), while the transverse noncompact directions extend to spatial infinity tangent to the horizon sections. The key assumption is that any deviation from exact planar ($\mathbb{R}^{d-1}$) or hyperbolic ($\mathbb{H}^{d-1}$) symmetry is confined to a compact support in the transverse directions. Thus, sufficiently far along the noncompact directions at any fixed $z$ surface, the induced metric on $\mathcal{S}_z$ approaches the reference homogeneous geometry. Consequently, as depicted by the blue shaded region in Fig.~\ref{Asymptotic}, the transverse asymptotic regime is effectively the direct product of the bulk radial direction with either $\mathbb{R}^{d-1}$ or $\mathbb{H}^{d-1}$. 

\begin{figure}[tbph]
	\centering
	\includegraphics[width=0.48\textwidth]{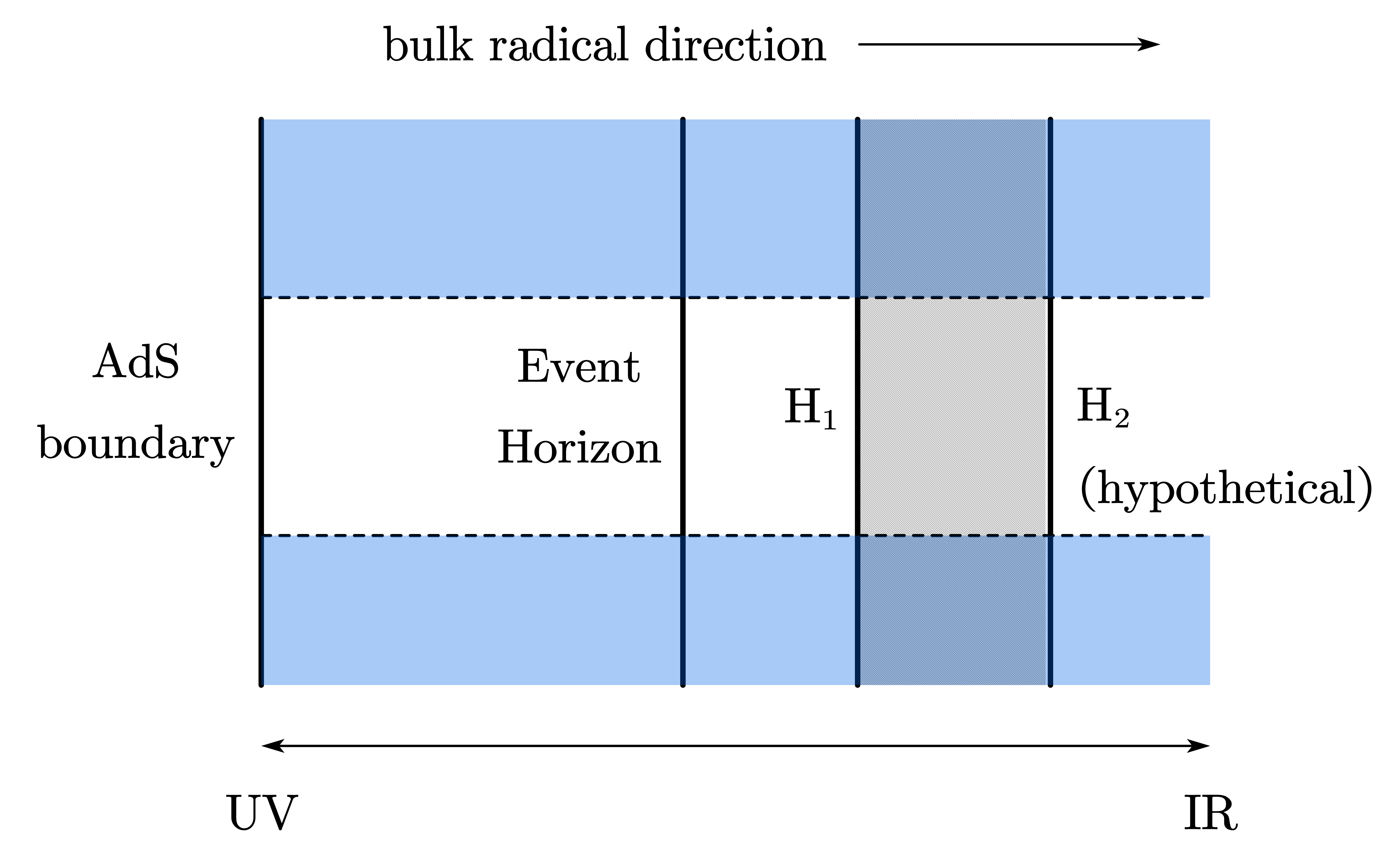}
    \captionsetup{
    justification=raggedright,
     singlelinecheck=false
    }
	\caption{Schematic diagram on asymptotic symmetry along transverses directions. The blue region denotes the transverse asymptotic region, which can be viewed as asymptotic $\mathbb{R}\times \mathbb{R}^{d-1}$ or $\mathbb{R}\times \mathbb{H}^{d-1}$, with the first factor representing the bulk radial direction.}
	\label{Asymptotic}
\end{figure}

This assumption is essential for making the infinite-volume averages in Eq.~\eqref{F} well-defined. Without it, the integrated curvature and extrinsic curvature terms would grow with the cutoff $L$ in an uncontrolled manner, and the limit  $\langle X\rangle_z:=\lim_{L\to\infty}\mathcal A^{-1}(L)\int_{\mathfrak B_{z,L}}X\,\mathrm dS$ could diverge or be path-dependent. Moreover, the key inequality $\langle \mathfrak R\rangle_z\leq 0$ used in Eq.~\eqref{eq:Fprimegauge} relies crucially on this assumption: any positive-curvature region is confined to a compact set, so its contribution to the average vanishes as the volume diverges, leaving only the nonpositive asymptotic curvature (zero for planar, negative for hyperbolic) to survive the averaging. Though this is not a weak assumption in mathematical sense, it precisely covers the physically relevant scenarios in holographic duality, where planar or hyperbolic AdS black holes serve as duals of quantum matter, and inhomogeneities (e.g., localized impurities or modulated chemical potentials) naturally appear as compactly supported geometric deviations. We leave more general case to future study. 


\subsection{Definition of the flux function $F(z)$}

We foliate the stationary pocket $\mathcal{D}$ by a family of $(d-1)$-dimensional leaves $\mathcal{S}_{z}$. There exists a smooth function $z:\mathcal{D}\rightarrow [z_1,z_2]$ such that $\mathcal{D}=\bigcup_{z\in[z_1,z_2]}\mathcal{S}_z$, where $\mathcal{S}_z:=\{p\in\mathcal{D}\mid z(p)=\mathrm{const.}\}$ is a level set of $z$, and the boundary satisfies $\mathcal{S}_{z_1}=H_1$ and $\mathcal{S}_{z_2}=H_2$ (Note that $\mathcal H$ denotes the $d$-dimensional horizon, while $H=\mathcal H\cap\Sigma_t$ denotes its $(d-1)$-dimensional horizon section on $\Sigma_t$). We require that the foliation preserves the symmetry generated by $\xi^\mu_{(\phi)}$, so that $\xi^\mu_{(\phi)}$ is tangent to every leaf $\mathcal{S}_z$. 

Under the asymptotic assumption in Sec.~S2.1, the leaves $\mathcal{S}_{z}$ inherit asymptotically planar or hyperbolic geometry along the noncompact directions. More precisely, outside a sufficiently large compact region $\mathfrak{B}_{z,L}$, each $\mathcal{S}_z\setminus \mathfrak{B}_{z,L}$ is diffeomorphic to an end of either $\mathbb{R}^{d-1}$ or $\mathbb{H}^{d-1}$, and its induced metric approaches the corresponding planar or hyperbolic metric at infinity.

Let $s^a$ be the unit normal to $\mathcal{S}_{z}$ within $\mathcal{D}$, oriented such that $s^a D_{a}z>0$, and let $\varphi$ be the lapse of the foliation, defined by
\begin{equation}
    s^aD_a z=\varphi^{-1}\,.
\end{equation}
The normal evolution of the foliation is therefore generated by $\varphi s^a$, which advances the foliation parameter $z$ at unit rate.

\begin{figure}[tbph]
	\centering
	\includegraphics[width=7cm]{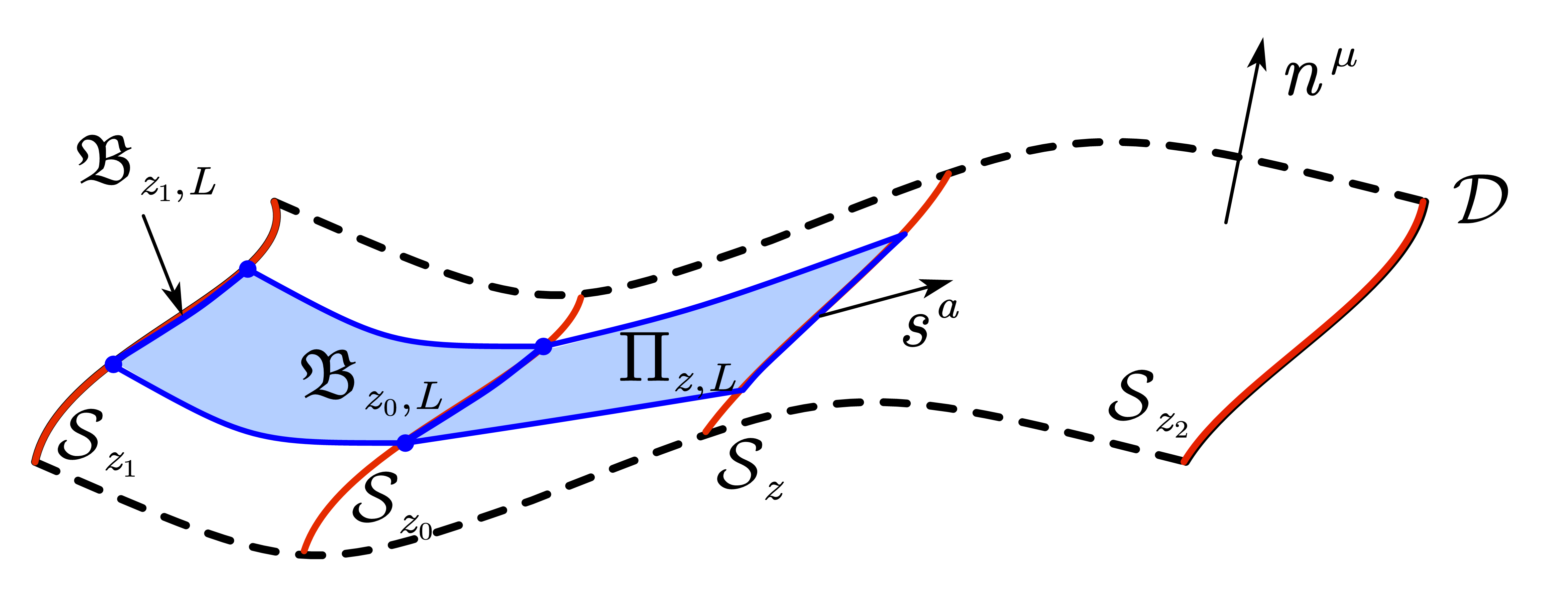}
    \captionsetup{
    justification=raggedright,
     singlelinecheck=false
    }
	\caption{The section of stationary pocket $\mathcal{D}$ is foliated by a series of $(d-1)$-dimensional level sets $\mathcal{S}_{z}$, which is denoted by red lines. The shadow region is $\Pi_{z,L}:=\bigcup_{z_k\in[z_1,z]}\mathfrak{B}_{z_{k}}$.}
	\label{S2.1A}
\end{figure}

To define the flux $F(z)$ via integrals over the noncompact leaves, we first choose on the reference leaf $\mathcal{S}_{z_{1}}$ (i.e., the level set $z=z_1$) a compact cut-off region $\mathfrak{B}_{z_{1},L}$ (as shown in Fig.~\ref{S2.1A}), whose boundaries lie in the asymptotic planar or hyperbolic region of $\mathcal{S}_{z_{1}}$, such that the regions exhaust $\mathcal{S}_{z_1}$ as $L\rightarrow\infty$. Once $\mathfrak{B}_{z_1,L}$ is fixed, its boundary is propagated along the
normal evolution of the prescribed foliation, and the compact region enclosed by the propagated boundary on $\mathcal{S}_z$ is denoted by $\mathfrak{B}_{z,L}$. Thus, for a fixed $L$, the cutoff regions $\mathfrak{B}_{z,L}$ on different leaves correspond to the same reference cutoff $\mathfrak{B}_{z_1,L}$ transported through the foliation.

Let $\gamma_{IJ}$ be the induced metric on $\mathcal{S}_z$, and denote by
$\mathfrak{D}_A$ its covariant derivative and by $\mathfrak{R}$ its
scalar curvature. Let $\mathcal{K}_{IJ}$ be the extrinsic curvature of
$\mathcal{S}_z\subset\Sigma_t$, with
$\mathcal{K}=\gamma^{IJ}\mathcal{K}_{IJ}$. The flux function in the main text reads
\begin{equation}
F(z):=\lim_{L\rightarrow\infty}\frac{1}{\mathcal{A}(L)}
\int_{\mathfrak B_{z,L}}
\left[s^a\partial_aN-\frac{\mathcal K N}{d-1}\right]\mathrm dS,
\label{FAd0}
\end{equation}
where $\mathcal{A}(L):=\int_{\mathfrak B_{z_1,L}}\mathrm dS$. For convenience, we omit the symbol $\lim_{L\to\infty}$ and always assume $L\to\infty$ in the following.

\subsection{Derivation of $F'(z)$}
As shown in Fig.~\ref{S2.1A}, we define the domain $\Pi_{z,L}:=\bigcup_{z_k\in[z_1,z]}\mathfrak{B}_{z_{k}}$. Then from Eq.~\eqref{FAd0} we obtain
\begin{align}
{F}'(z)&=\frac{{F}(z+\Delta z)-{F}(z)}{\Delta z}\notag\\
&=\frac{1}{\mathcal{A}\Delta z}\int_{\partial(\Delta\Pi_{z,L})}\left[ \partial_{a}N-\frac{\mathcal{K}Ns_a}{d-1} \right]\text{d}S^a\,.\label{SA}
\end{align}
Here region $\Delta\Pi_{z,L}=\Pi_{z+\Delta z,L}\backslash\Pi_{z,L}$. Note $\partial(\Delta\Pi_{z,L})$ does also contain the boundary $E=\bigcup_{z_{k}\in[z,z+\Delta z]}\partial\mathfrak{B}_{z_{k},L}$ (as shown in Fig.~\ref{S2.2A}). 
However, these lateral boundaries do not contribute in the limit $L\rightarrow\infty.$ Let $\nu^{a}$ denote the outward unit normal to the lateral boundary. Since $\nu^{a}$ is tangent to each leaf, it satisfies $\nu^{a}s_{a}=0$. For the asymptotically planar or hyperbolic ends considered here, the lapse $N$ becomes asymptotically constant on each leaf: $N \xrightarrow{L\rightarrow\infty}N_{\infty}(z)$, and hence its tangential gradient $\nu^{a}\partial_{a}N$ vanishes in the limit $L\rightarrow \infty$. Thus the contribution from the boundary $E$ vanishes and so Eq.~\eqref{SA} holds. 
\begin{figure}[tbph]
	\includegraphics[width=4.5cm]{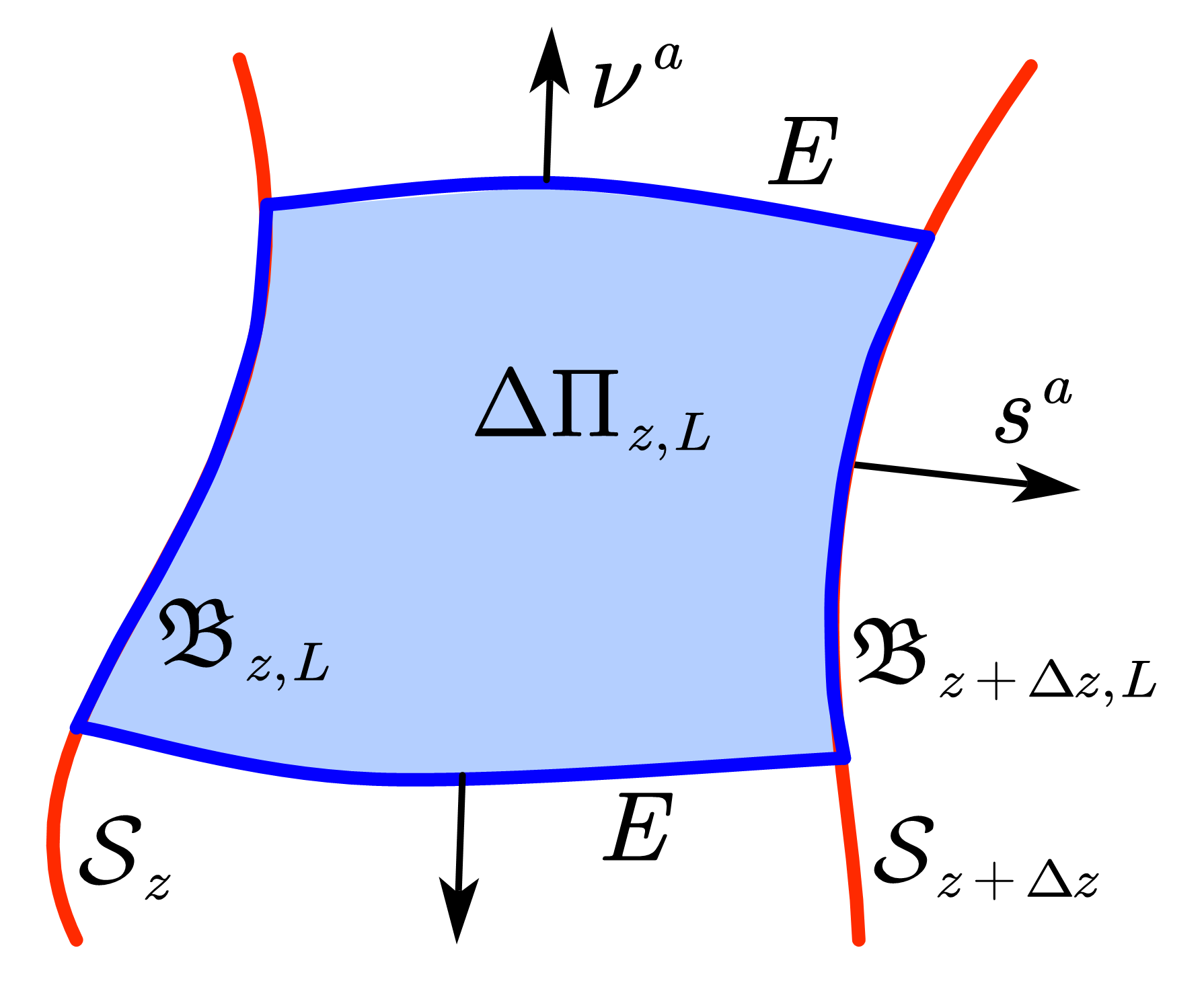}
    \captionsetup{
    justification=raggedright,
     singlelinecheck=false
    }
	\caption{Schematic illustration of $\Delta \Pi_{z,L}$ between $\mathfrak{B}_{z,L}\subset\mathcal{S}_{z}$ and $\mathfrak{B}_{z+\Delta z,L}\subset\mathcal{S}_{z+\Delta z}$. The lateral boundary $E$ is generated by propagating the cutoff boundary along the foliation. Its outward unit normal vector is $\nu^a$, which is tangent to each leaf and therefore satisfies $\nu^{a}s_{a}=0$.}
	\label{S2.2A}
\end{figure}

Applying Stokes' theorem gives
\begin{align}
    &\int_{\partial(\Delta\Pi_{z,L})}\left[ \partial_{a}N-\frac{\mathcal{K}Ns_a}{d-1} \right]\text{d}S^a\notag\\=&\int_{\Delta\Pi_{z,L}}\left[ D^2N-\frac{D_{a}(\mathcal{K}Ns^a)}{d-1} \right]\text{d}V,
\end{align}
where $\text{d}V$ denotes the volume element of $\Delta\Pi_{z,L}$. 
With the foliation lapse $\varphi$, we have $\text{d}V=\varphi\,\text{d}z\,\text{d}S$. Furthermore,
\begin{align}
D^{a}(\mathcal{K}Ns_{a})=Ns^{a}\partial_{a}\mathcal{K}+N\mathcal{K}^2+\mathcal{K}s^{a}\partial_{a}N, \label{q}
\end{align}
together with Eq.~(6) and Eq.~(C.15) of Ref.\cite{10.1093/ptep/ptx072}
\begin{align}
s^a \partial_a \mathcal{K} = -\frac{1}{2} \left( {}^{(d)}R - \mathfrak{R} + \mathcal{K}^2 + \mathcal{K}_{IJ} \mathcal{K}^{IJ} \right) - \varphi^{-1} \mathfrak{D}^2 \varphi, \label{w}
\end{align}
\begin{align}
\mathcal{K}_{IJ} \mathcal{K}^{IJ}  =& \mathcal{K}^2 - \mathfrak{R} - 2 G_{ab} s^a s^b + \frac{2}{N} \mathfrak{D}^2 N \notag \\
&+\frac{2\mathcal{K}}{N} s^a \partial_a N+\frac{\Psi^2}{2N^2}\bigl[(s^a \partial_{a}\omega)^2-(\mathfrak{D}\omega)^2\bigr]. \label{e}
\end{align}
Here $G_{ab}=G_{\mu\nu}{e^{\mu}}_{a}{e^{\nu}}_{b}$ is the projection of the Einstein tensor onto $\Sigma_t$ and ${^{(d)}}R$ is scalar curvature of $d$-dimensional spacelike hypersurface $\Sigma_t$. Combining Eqs.~\eqref{q}, \eqref{w} and \eqref{e} we obtain
\begin{align}
{F}'(z)
=& \frac{1}{(d-1)\mathcal{A}} \int_{\mathcal S_{z,L}} \left\{ (d-1)D^2N+ {^{(d)}R}N/2 \right. \notag \\  &\left. - \mathfrak{R} N- NG_{ab}s^a s^b  + \varphi^{-1}N\mathfrak{D}^2\varphi + \mathfrak{D}^2N \right.\notag \\&\left.+
\frac{\Psi^2}{4N}\left[(s^a\partial_{a}\omega)^2-(\mathfrak{D}\omega)^2\right]\right\} \varphi \text{d}S.
\end{align}
Define $\rho:=T_{\mu\nu}n^{\mu}n^{\nu}$ as the energy density, and $\mathcal{T}_{ab}:=T_{\mu\nu}{e^{\mu}}_{a}{e^{\nu}}_{b}$ as the projection of the stress-energy tensor onto $\Sigma_t$. Using the Einstein constraint equations in ADM decomposition
\begin{align}
^{(d)}R+K^2-K_{ab}K^{ab}=2\rho, \label{const1}
\end{align}
\begin{align}
D^2N=&N\left({}^{(d)}R+K^2\right)-N\left[ \mathcal{T}+\frac{d}{d-1}(\rho-\mathcal{T}) \right], \label{const2}
\end{align}
we can eliminate $D^2N$ and ${}^{(d)}R$. Noting that the extrinsic curvature $K_{ab}$ of $\Sigma_{t}$ reads
\begin{align}
K_{ab}= \frac{1}{2N}(\Psi_a D_b\omega+\Psi_b D_a\omega),
\end{align}
we have
\begin{align}
K^{ab}K_{ab}=\frac{\Psi^2}{2N^2}(D\omega)^2=\frac{\Psi^2}{2N^2}\left[(s^a\partial_{a}\omega)^2+(\mathfrak{D}\omega)^2\right].
\end{align}
Finally we arrive at
    \begin{align}\label{dfdz1}
{F}'(z)
=&\lim_{L \to \infty} \frac{1}{(d-1)\mathcal{A}(L)}\int_{\mathfrak{B}_{z,L}} \Bigg[ N\varphi(\varpi - \mathfrak{R}) \notag \\&  + \mathfrak{D}^2(N\varphi) - 2\gamma^{IJ}(\partial_I\varphi)(\partial_J N) \notag\\
& +\frac{\Psi^2\varphi}{N}\left[\frac{d}{2}(s^a \partial_{a}\omega)^2+\frac{d-1}{2}(\mathfrak{D}\omega)^2\right]\Bigg] \text{d}S .
\end{align}
The quantity $\varpi$ reads
\begin{align}
    \varpi&=\rho d+T-\mathcal{T}_{ab}s^{a}s^{b}\notag \\
    &=\frac{1}{2}\sum_{I=1}^{d-1}(T_{\mu\nu}\ell_{I,+}^\mu\ell_{I,+}^\nu+T_{\mu\nu}\ell_{I,-}^\mu\ell_{I,-}^\nu),
\end{align}
where $\ell_{I,\pm}^\mu=n^\mu\pm e_I^\mu$ are null vectors. Here $\{n^\mu,s^\mu,e_I^\mu\}$ is a local orthonormal frame, with $e_I^\mu$ tangent to $\mathcal S_z$ ($I=1,2,\dots,d-1$). The Eq.~\eqref{dfdz1} gives Eq.~\eqref{eq:fluxidentity} in the main text. 

\subsection{Sign of $\langle \mathfrak{R}\rangle_z$ }\label{S21}
For the case $d<3$, we have $\mathfrak{R}=0$ and so $\langle \mathfrak{R}\rangle_z=0$. We now focus on $d\geq3$. For convenience, in the following we adopt the notation in the main text, writing $\langle X\rangle_{z}:=\lim_{L\rightarrow \infty }\mathcal{A}^{-1}(L)\int_{\mathfrak{B}_{z,L}}$ as the average density of $X$ on $\mathfrak{B}_{z,L}$. Thus, the averaged curvature density $\langle \mathfrak{R}\rangle_z$ reads
\begin{equation}
  \begin{aligned}
    \langle \mathfrak{R} \rangle_z=\lim_{L\rightarrow\infty}\frac{1}{\mathcal{A}(L)}
    \int_{\mathfrak B_{z,L}}\mathfrak{R}\mathrm dS ,
  \end{aligned}
\end{equation}
where $\mathcal{A}(L):=\int_{\mathfrak B_{z_1,L}}\mathrm dS$. It can be seen that $\langle \mathfrak{R}\rangle_z$ is a $z$ function but $\mathcal{A}(L)$ is not.

In our setup, we assume that the transverse asymptotic region of $\Sigma_t$ possesses planar or hyperbolic symmetry and that every section $\mathcal{S}_z$ approaches the corresponding symmetric geometry at transverse infinity. We further assume that all nonuniform contributions to the curvature, including any region of positive curvature, are confined to a compact subset $\mathfrak{B}_{z,L'}\subset\mathcal{S}_z$. Because the curvature integral over this compact region remains finite while the volume of the averaging domain diverges, its contribution to the averaged curvature density vanishes:
\begin{equation}
\lim_{L\to\infty}
\frac{1}{\mathcal{A}(L)}
\int_{\mathfrak{B}_{z,L'}}
\mathfrak{R}\,\mathrm{d}S
=
0.
\end{equation}
Consequently, the averaged curvature density is determined entirely by the noncompact asymptotic region. For planar asymptotics, this contribution vanishes, whereas for hyperbolic asymptotics it is negative. Therefore, the averaged curvature density on each section is nonpositive:
\begin{equation}
\langle \mathfrak{R}\rangle_z\leq 0.
\end{equation}

For the case $d=3$, we can obtain same result by a weak requirement: we only need to assume that transverse asymptotic region of horizon $\mathcal{H}_1$ possesses planar or hyperbolic symmetry. 

Following above argument of general $d$, we first find that the horizon satisfies $\langle\mathfrak{R}\rangle_{z_1}\leq 0$. To determine the sign of $\langle\mathfrak{R}\rangle_{z}$ for other $z$, we first use following formulas
\begin{equation}
    \begin{aligned}
        \delta\int_{\mathfrak B_{z,L}}\mathfrak{R}\mathrm{d}S&=\int_{\mathfrak B_{z,L}}G^{IJ}\delta \gamma_{IJ}\mathrm{d}S,\\
        \delta \gamma_{IJ}&=2\varphi \mathcal K_{IJ}\delta z.
    \end{aligned}
\end{equation}
The differentiation of $\langle \mathfrak{R}\rangle_z$ can be written as
\begin{equation}
\frac{\mathrm{d}}{\mathrm{d}z}
\langle \mathfrak{R}\rangle_z=2\left\langle \varphi G^{IJ}\mathcal{K}_{IJ}\right\rangle_z\, ,
\end{equation}
where $N$ is the lapse function appearing in the metric ansatz~\eqref{metricAd}, $G^{IJ}$ is the Einstein tensor of $\mathcal{S}_z$, and $\mathcal{K}_{IJ}$ and $\mathcal{K}$ denote, respectively, the extrinsic curvature of $\mathcal{S}_z$ and its trace. In the case $d=3$, each section $\mathcal{S}_z$ is a two-dimensional Riemannian surface. The Einstein tensor of any two-dimensional Riemannian manifold vanishes identically, namely,
\begin{equation}
G^{IJ}=0.
\end{equation}
It therefore follows that
\begin{equation}
\frac{\mathrm{d}}{\mathrm{d}z}\langle \mathfrak{R}\rangle_z=0, \quad \langle \mathfrak{R}\rangle_z
=
\langle \mathfrak{R}\rangle_{z_1}.
\label{dR2}
\end{equation}
The sign of the averaged curvature density is preserved along the flow and so we have $\langle \mathfrak{R}\rangle_z=\langle \mathfrak{R}\rangle_{z_1}\leq 0$.

\section{Weak Continuation of the Inverse Lapse Flow}\label{S3}
In the main text, the monotonicity argument for noncompact horizon sections case is formulated using a family of leaves $\mathcal S_z$ satisfying the inverse lapse flow $\varphi N=1$. If this family remained a globally smooth foliation connecting the two inner horizons, the nonnegativity of $F'(z)$ would immediately imply $F(z_2)\geq F(z_1)$, contradicting to $F(z_2)<F(z_1)$ required by existence of the stationary pocket. A global smooth foliation, however, is not generally guaranteed. Four distinct obstructions may arise. 
\begin{itemize}
\item \textit{Divergency of speed at the initial surface.} It requires that the flow starts from one horizon $z=z_1$. However, the lapse function $N=0$ at $z=z_1$, which leads to the flow speed $\varphi=1/N\rightarrow\infty$.
\item \textit{Focal-cusp formation.} The evolving leaf may develop a cusp when neighboring normal trajectories focus.

\item \textit{Self-intersection.} As a leaf propagates under the inverse lapse flow, different portions of a same evolving leaf may meet, causing the leaf to become self-intersecting.

\item \textit{Two-boundary overdetermination.} A flow that starts from one horizon does not reach the other horizon simultaneously, leading to a global two-boundary compatibility problem.

\end{itemize} 
For these four situations, we introduce a weak continuation of the inverse lapse flow based on the viscosity solution of the eikonal equation. We then show that, in this weak continuation, the inequality $F(z_{2})\ge F(z_{1})$ established in the main text remains valid.

\subsection{Inverse lapse flow}


To study the weak continuation of the inverse lapse flow, we start from the foliation setup and notation introduced in Sec.~\ref{S2}. Since $D_a z$ is normal to the level set $\mathcal S_z$, one has $D_a z=\varphi^{-1}s_a$. The inverse lapse gauge $\varphi N=1$ adopted in the main text is therefore equivalent to the eikonal equation
\begin{align}\label{ekeq1}
    h^{ab}D_a zD_b z=N^2.
\end{align}
Here $h_{ab}$ is the induced metric on $\Sigma_{t}$. This foliation is closely analogous to the inverse mean curvature flow (IMCF)~\cite{Jang:1977kef,https://doi.org/10.1111/j.1749-6632.1973.tb41445.x}, in which the evolving leaf moves in its normal direction with speed given by the inverse of its mean curvature $\mathcal{K}$ (the trace of the extrinsic curvature of $\mathcal{S}_{z}$). In the present case, the mean curvature $\mathcal{K}$ is replaced by the ADM lapse $N$: the normal speed with respect to the flow parameter $z$ is $N^{-1}$. The flow can be visualized as follows. As shown in Fig.~\ref{S3.1A}, start from a smooth initial leaf $S_{z_0}$, and let $s^a$ be its unit normal. For an infinitesimal increase $\text{d}z$, each point of $S_{z_0}$ is moved forward along its normal $s^a$ by the proper distance
\begin{align}
    \text{d}\ell=\varphi\,\text{d}z=\frac{\text{d}z}{N}.
    \label{dell}
\end{align}
All these points then form the next leaf $S_{z_0+\text{d}z}$. Taking this new leaf as the initial surface and repeating the same normal displacement yields $S_{z_0+2\text{d}z}$, and repeating this procedure generates a series of leaves $\{\mathcal S_z\mid z\geq z_0\}$. Thus the gauge $\varphi N=1$ defines a geometric flow: each leaf is generated from the preceding one by moving along the normal with speed $1/N$. This procedure is valid as long as the evolving leaves remain smooth and the normal trajectories do not intersect.
\begin{figure}[tbph]
	\centering
	\includegraphics[width=7cm]{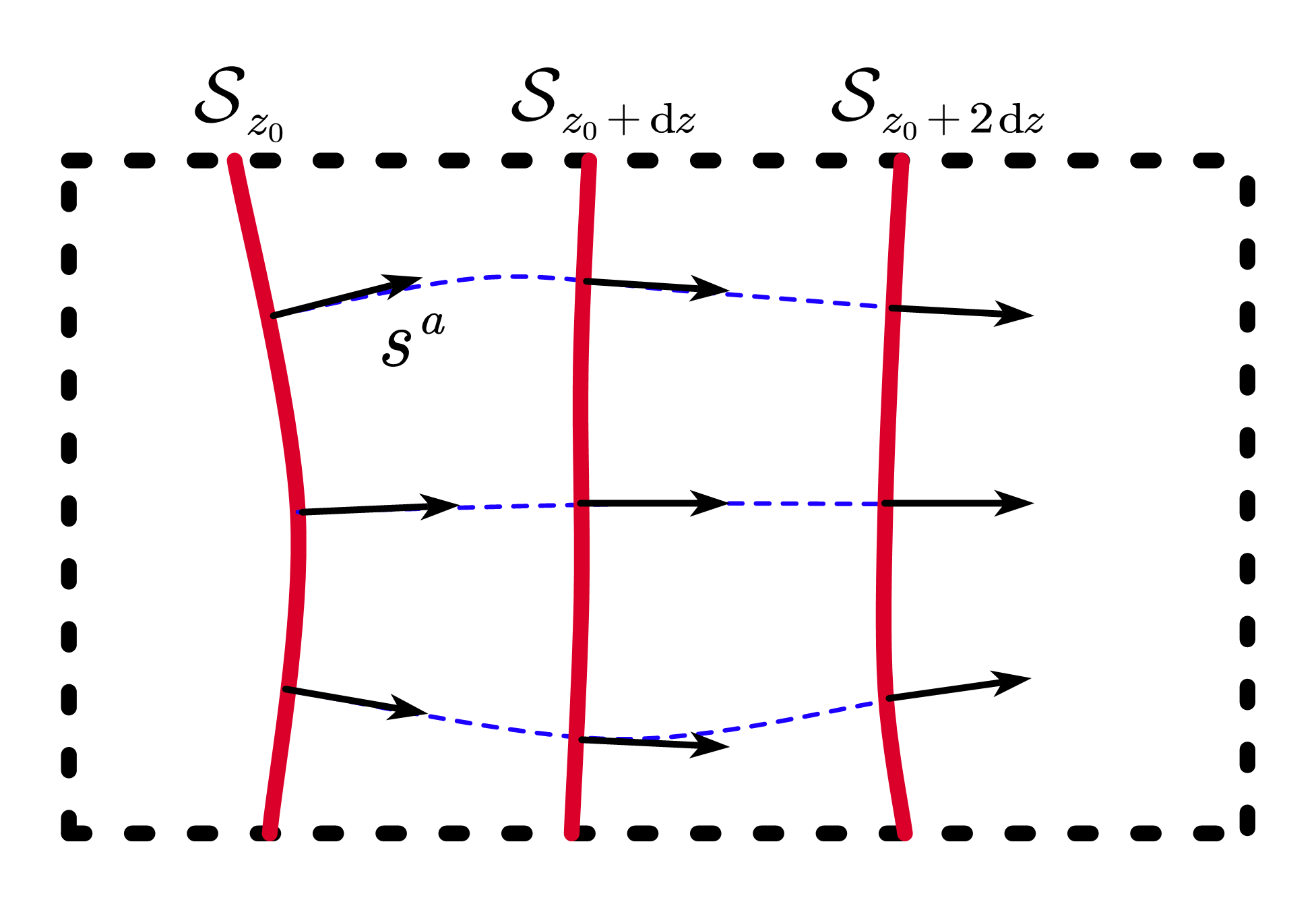}
    \captionsetup{
    justification=raggedright,
     singlelinecheck=false
    }
	\caption{The schematic diagram about inverse lapse flow. The flow is generated by the vector field of which the direction is along the normal vector $s^a$ and speed is the inverse of lapse function $N$.}
	\label{S3.1A}
\end{figure}


\subsection{Why smooth inverse lapse flow may fail}
Now let us explain in detail how the four distinct obstructions arise and why the classical smooth inverse lapse flow may fail. 

\paragraph*{Case A: Divergency of speed at initial surface.} The flow is originally defined by the gauge $\varphi N=1$ and so the local speed of the flow is $\varphi=1/N$. To compare the values of $F(z)$ at two horizons, the flow should start from one horizon. However, at the horizon we have $N=0$, and so the flow loses validity at the initial surface. 

\paragraph*{Case B: Focal-cusp formation.}
For a given lapse $N$ and the initial leaf $\mathcal{S}_{z_{0}}$, the normal trajectories star form neighboring points may gradually converge as the inverse flow proceeds. When this convergence occurs, the normal trajectories focus at a common point at some finite parameter value $z$ (as shown by the red solid curves in Fig.~\ref{S3.2A}). This focal point corresponds to a cusp on the evolving leaf, where the two sides meet with distinct limiting normals. Thus, the leaf is no longer smooth, and the unit normal $s^a$ and mean curvature $\mathcal{K}$ are no longer defined in the usual sense at the focal-cusp. We should therefore adopt a regularized integral formulation to handle the curvature quantities at the singular set.

\begin{figure}[tbph]
	\centering
	\includegraphics[width=6cm]{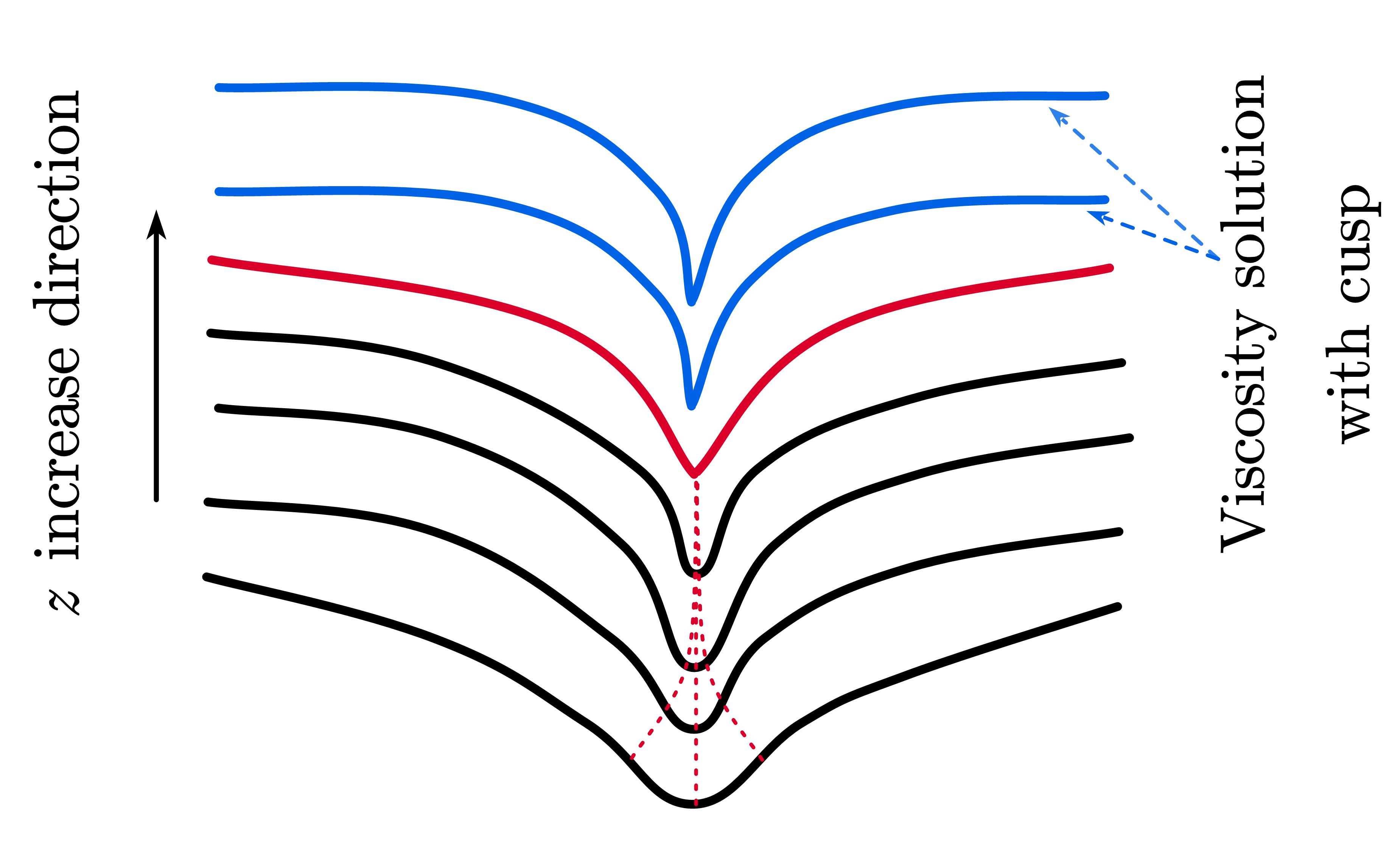}
    \captionsetup{
    justification=raggedright,
     singlelinecheck=false
    }
    	\caption{Schematic diagram of focal-cusp formation. The black solid curves represent the evolving smooth leaves of the foliation. The red dashed curves are normal trajectories, which progressively converge toward the caustic point. The red solid curve is the leaf containing the caustic point, where the normal trajectories meet and the leaf develops a cusp. The blue solid curve represents the weak solution (viscosity solution), which contains the cusp in the leaf.}
	\label{S3.2A}
\end{figure}

\paragraph*{Case C: Self-intersection.}
Due to the possible inhomogeneity of the lapse function $N$, the normal evolution speed $1/N$ varies across different regions of a leaf. This speed disparity causes two initially separated points on the same evolving leaf, following their respective normal trajectories, to eventually meet at the same point of $\mathcal D$, causing the leaf to intersect itself. Before this happens, the regions around these two points evolve independently and remain smooth and well separated. At the moment of contact, the leaf meets itself, forming a self-intersection point $\mathcal P_z$ (see Fig.~\ref{S3.2B}). At the self-intersection, the normal to the level set is no longer unique. Consequently, the unit normal $s^a$ and the mean curvature $\mathcal K$ cannot be defined as single-valued functions at that point.

This situation differs from focal-cusp formation. In the focal-cusp case, neighboring normal trajectories gradually converge and focus, with the separation between neighboring points being compressed to zero. The self-intersection is not caused by the local focusing. Rather, it arises through a different mechanism: the two distinct points of the same leaf are transported along different normal trajectories and arrive at the same point.

\begin{figure}[tbph]
	\centering
	\includegraphics[width=6.5cm]{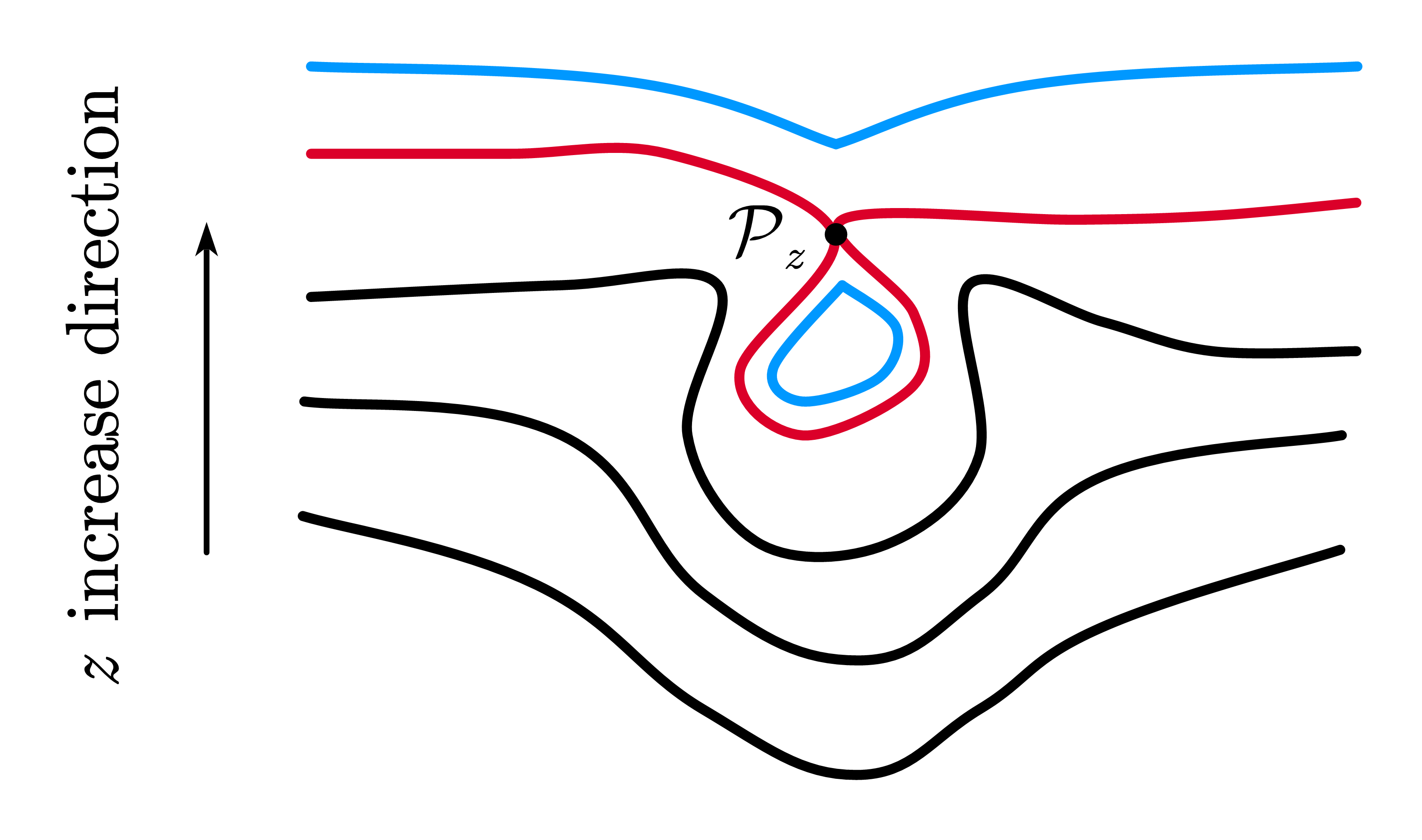}
    \captionsetup{
    justification=raggedright,
     singlelinecheck=false
    }
	\caption{Schematic diagram of self-intersection. The smooth leafs (black curves) develop a self-intersection singularity (red curve). After then smooth inverse lapse flow is broken. The leaf (blue curve) that follows will be given by a weak solution (viscosity solution), which contains two disconnected branches: one branch forms a cusped wavefront that continues to propagate, while the other forms a closed cusped component that gradually contracts.}
	\label{S3.2B}
\end{figure}

\paragraph*{Case D: Two-boundary overdetermination.}
This obstruction is not a local singularity of an evolving leaf, but a global compatibility problem. The region $\mathcal D$ discussed in the main text is bounded by two inner Killing horizon sections, denoted respectively by $H_1$ and $H_2$. And both boundaries are level sets of $z$, i.e., $z|_{H_1}=z_1$ and $z|_{H_2}=z_2$. Thus, one needs to find a global solution to the eikonal equation Eq.~\eqref{ekeq1} that satisfies both boundary conditions simultaneously. However, since the eikonal equation is first order, its solution is naturally determined by prescribing the value of $z$ on only one boundary and propagating the corresponding level sets into the interior. Prescribing the values of $z$ independently on both $ H_1$ and $ H_2$ adds an extra condition: the one-sided evolution generated from one horizon must reach the other horizon simultaneously. Such a compatibility condition is not guaranteed in a general geometry, although it may arise naturally in highly symmetric configurations, such as those with planar, spherical, or hyperbolic symmetry.

Unlike a focal cusp or a self-intersection, this obstruction may occur even when every leaf generated remains smooth. The problem is that the one-sided classical flow does not produce a single leaf that covers the complete second boundary.

\begin{figure}[tbph]
	\centering
	\includegraphics[width=6cm]{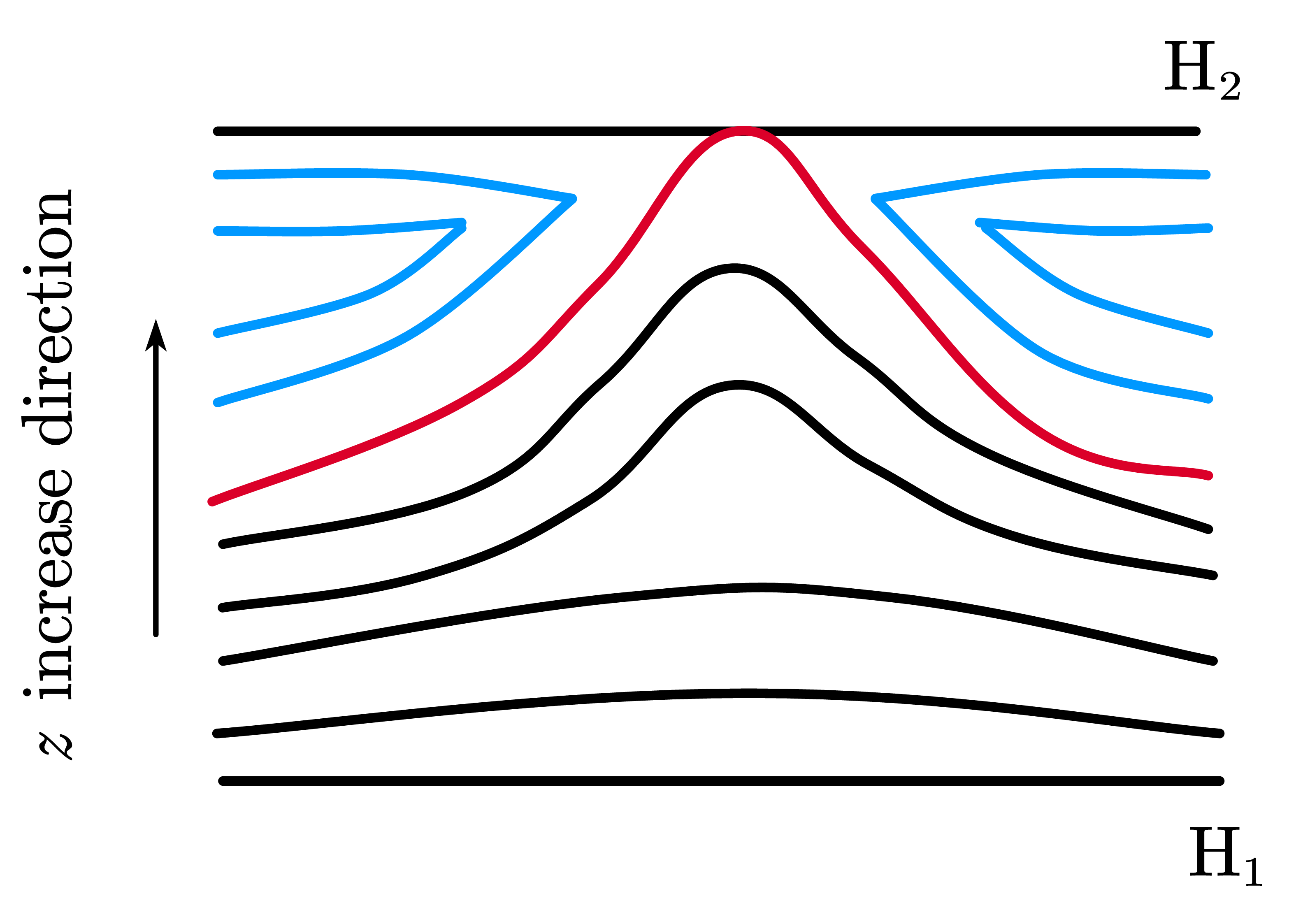}
    \captionsetup{
    justification=raggedright,
     singlelinecheck=false
    }
	\caption{Schematic diagram of two-boundary overdetermination. Black curves and the red curve are smooth leaves generated from $H_{1}$. Moreover, the red curves is the leaf $\mathcal{S}^{(-)}_{z_{c}}$ that first contact the second horizon $H_{2}$ at the value $z=z_{\text{c}}$. The blue curves denote leaves $\widetilde {\mathcal{S}}_{z}$ which are generated from the composite initial surface  $\widetilde {\mathcal{S}}_{z_{\text{c}}}=\mathcal{S}^{(-)}_{z_{c}}\cup H_{2}$}.
	\label{S3.2C}
\end{figure}

Let us make a short summary. \textit{Case A} describes divergency of flow speed at initial horizon. The eikonal equation~\eqref{ekeq1} and our inverse lapse flow cannot be defined directly at the initial horizon. \textit{Case B} and \textit{Case C} represent local singularities of the evolving leaves, while \textit{Case D} represents a global failure of boundary compatibility. In each case, the classical family $\{\mathcal S_z\}$ fails to provide a globally smooth foliation connecting the two horizons section. In the following sections, we use the viscosity solution of the eikonal equation to define a weak solution of the inverse lapse flow. We will show that it can overcome above divergency and singularities as well as preserves flux monotonicity used in the main text.

\subsection{Viscosity solution of the eikonal equation}

Once a focal cusp or a self-intersection forms, the inverse lapse flow can no longer be a classical smooth foliation. Nevertheless, the eikonal equation still admits a natural weak continuation in the viscosity sense. A viscosity solution is defined by testing a nondifferentiable function with smooth functions that touch it locally from above or below. The upper and lower touching functions define viscosity sub- and supersolutions, which provide the two opposite inequalities required by the Hamilton–Jacobi equation and thereby select the admissible weak solution through the comparison principle. Equivalently, the viscosity solution can be obtained through the standard vanishing-viscosity procedure, in which a small diffusion term is introduced to smooth the equation and is then taken to zero (see Chapter 10 of Ref.~\cite{evans2022partial} for a detailed construction). For our purposes, it is convenient to characterize the viscosity solution through its associated value function, following the standard Hamilton-Jacobi formulation presented in Ref.~\cite{evans2022partial}.

Let $\mathcal S_{z_1}$ be a smooth initial leaf in the interior of $\mathcal D$. For each point $x\in\mathcal D$, let $d_N(x,\mathcal S_{z_1})$ denote the minimum weighted length among all spatial curves connecting the initial leaf $\mathcal S_{z_1}$ to $x$, where the weighted length of each curve is measured with the weighted line element $N\,\mathrm d\ell$ (see Eq.\eqref{dell}). Thus, $d_N(x,\mathcal S_{z_1})$ represents the shortest $N$-weighted distance from $\mathcal S_{z_1}$ to $x$. The viscosity solution of eikonal equation~\eqref{ekeq1} is given by the following value function
\begin{align}
    z(x)=z_1+d_N\!\left(x,\mathcal S_{z_1}\right).
    \label{eq:value_function}
\end{align}
Equivalently, it gives the minimum value of the flow parameter $z$ at which the inverse lapse flow reaches $x$. The leaf $\mathcal  S_z$ for $z>z_1$ is then defined as the set of points with the same weighted distance from $\mathcal S_{z_1}$, 
\begin{align}
    \mathcal S_z= \left\{x\in\mathcal D\,\middle|\,z(x)=z\right\}.
\end{align}
In other words, the $\mathcal S_{z_1}\cup\mathcal{S}_{z}$ gives the boundaries of points that satisfy $0<d_N(x,\mathcal S_{z_1})\leq z-z_1$. 
This viscosity solution agrees with the classical inverse lapse flow wherever the evolution remains smooth. Along a smooth normal trajectory, the inverse lapse condition gives $\text{d}z=N\,\text{d}\ell$. When the minimizing curve from $S_{z_0}$ to $x$ is unique and varies smoothly with $x$, it follows the normal direction of the level sets of $z$. In this region, the value function is smooth, satisfies the eikonal equation in the classical sense, and reproduces the evolving leaves generated by the inverse lapse flow.

Near the horizon, the foliation is given by the viscosity solution ~\eqref{eq:value_function} of the eikonal equation~\eqref{ekeq1}, rather than by integrating the divergent speed $1/N$ from the horizon. Though the original inverse lapse flow defined by $\varphi N=1$ loses validity, the viscosity solution ~\eqref{eq:value_function} is well-defined. The horizon $z=z_1$ serves as reference surface to define the $N$-weighted distance function $d_N(x,\mathcal S_{z_1})$. This overcomes the speed divergency at the initial surface.

The importance of the value-function description appears when different normal trajectories focus or meet. Even after the classical leaf loses smoothness, Eq.~\eqref{eq:value_function} continues to assign a unique value of $z$ to every point. The function $z(x)$ therefore remains continuous and single-valued, although it need not remain differentiable on the singular set. Geometrically, the viscosity solution does not remove the cusp or smooth out the self-intersection. Instead, it provides a unique weak continuation of the evolving level sets after the classical normal construction has broken down.

When a focal cusp forms on the leaf, although the normal $s^a$ are not defined at the cusp, the viscosity solution admits that the leaf continues to evolve as a cusped wavefront within the framework of viscosity solutions (see the blue curves of Fig.~\ref{S3.2A} as a schematic diagram). In the case of self-intersection, the viscosity solution constructs the evolution of the leaves after the intersection through weighted distance. Upon self-intersection, the leaf forms a noncompact cusped wavefront $\mathcal M_z$ and a closed cusped component $\mathcal L_z$ (see the blue curve in Fig.~\ref{S3.2B} as a schematic diagram). The noncompact wavefront continues to propagate in the remaining region of $\mathcal D$, while the compact set evolves inward and gradually contracts.

The viscosity solution preserves the level-set description, but the singularities still persist: at the cusp, the normal and curvature remain undefined. In the following sub-section, we adopt a regularization procedure to handle the singularities at the cusp, and by which we will also explain why the flux monotonicity used in the main text is preserved. 

\subsection{Regularization of singularities in the weak leaf}\label{regs1}

The viscosity solution provides a continuous and single-valued continuation of the flow after the classical inverse lapse flow develops focal singularities or self-intersection. It keeps the cusps caused by focal focusing and self-intersection during the flow rather than eliminating them. Our flux function $F(z)$ contains the integration of extrinsic curvature, which is ill-defined at cusps. We will show that these cusp singularities are mild in the sense that they can be treated by the regularization procedure and will not break the monotonicity of flux function $F(z)$.

Let $\mathcal P_z\subset \mathcal S_z$ denote the cusp set in a weak leaf $S_z$. To define the geometric quantities at the cusp, we introduce a family of smooth regularized leaves $\mathcal S_z^\varepsilon$. For each $\varepsilon>0$, the regularized leaf $\mathcal S_z^\varepsilon$ is obtained by smoothing out the cusp of $\mathcal S_z$ in an $\varepsilon$-neighborhood of the cusp set $\mathcal P_z$ (as shown in Fig.~\ref{S3.4A}), while agreeing with the smooth branches of $\mathcal S_z$ outside this neighborhood. As $\varepsilon \rightarrow 0$, the regularized leaf converges to $\mathcal S_z$ in the limit, while the regularized neighborhood shrinks to the cusp set $\mathcal{P}_{z}$.

\begin{figure}[tbph]
	\centering
	\includegraphics[width=4.5cm]{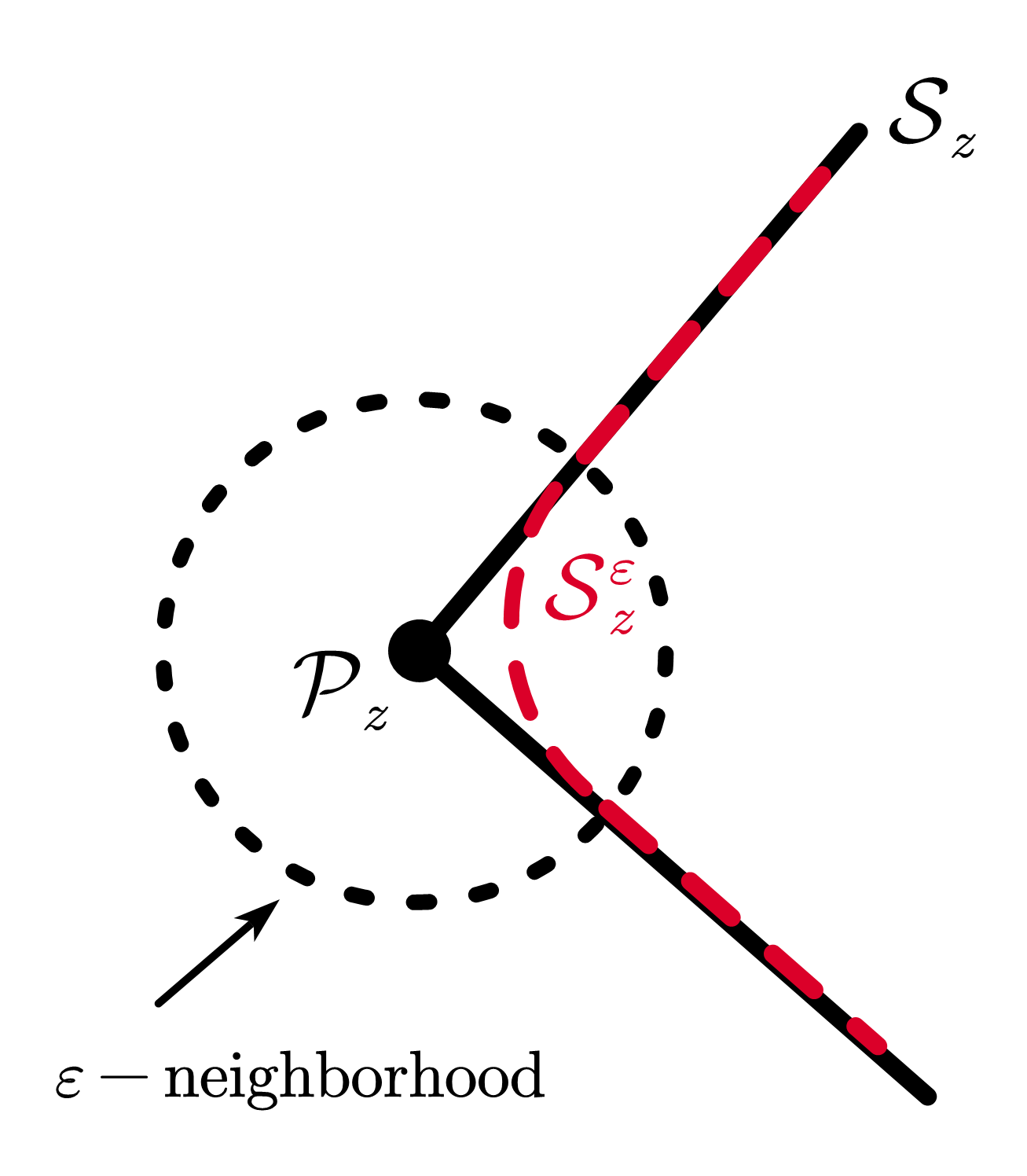}
    \captionsetup{
    justification=raggedright,
     singlelinecheck=false
    }
	\caption{Local regularization of a cusp in a weak leaf $\mathcal S_z$. The cusp $\mathcal P_z$ is replaced within an $\varepsilon$-neighborhood by a smooth segment $\mathcal S_z^\varepsilon$ (red dashed), which coincides with $\mathcal S_z$ outside the smoothing region. In the limit $\varepsilon\to0$, $\mathcal S_z^\varepsilon$ approaches $\mathcal S_z$.}
	\label{S3.4A}
\end{figure}

On the smoothing region, the curvature is well-defined on $\mathcal{S}_{z}^{\varepsilon}$. Its integral over $\varepsilon$-neighborhood, in the limit $\varepsilon \rightarrow 0$, is not zero but converges to the defect-angle contribution $\Theta$ supported on $\mathcal{P}_{z}$. Thus, the curvature integral over the weak leaf decomposes as
\begin{align}
    \int_{\mathcal{S}_{z}}\mathcal{K}\, \mathrm{d}S=\int_{\mathcal{S}_{z}\setminus \mathcal{P}_{z}}\mathcal{K} \, \mathrm{d}S +\int_{\mathcal{P}_{z}}\Theta\, \mathrm{d}\mu^{m}
\end{align}
where $\mathrm d\mu^{m}$ is the measure induced on the
$m$-dimensional cusp set ($m\le d-2$). Unlike the mean curvature term, the term $s^a\partial_aN$ remains bounded throughout the regularization, so it produces no contribution on $\mathcal{P}_{z}$. Thus, the flux $F(z)$ on weak leaf can be decomposed into a smooth part and a cusp part, 
\begin{equation}
    \begin{aligned}
    F(z)=\lim_{L\rightarrow\infty}
    \frac{1}{\mathcal A(L)}
    \Bigg[
    &
    \int_{\mathfrak B_{z,L}\setminus\mathcal P_z}
    \left(
        s^a\partial_a N
        -
        \frac{\mathcal K N}{d-1}
    \right)
    \mathrm dS
    \\
    &
    -
    \int_{\mathfrak B_{z,L}\cap\mathcal P_z}
    \frac{N\Theta}{d-1}
    \,
    \mathrm d\mu^{m}
    \Bigg].
    \end{aligned}
\end{equation}
The defect angle $\Theta$ and the lapse $N$ are bounded. Moreover, due to the assumed asymptotic planar or hyperbolic
behavior of the foliation, the defect angle disappears
asymptotically, with $\Theta\rightarrow0$ along the noncompact directions. Consequently, the defect angle term is finite, so that
\begin{align}\label{cusp}
    \lim_{L\rightarrow\infty}
    \int_{\mathfrak B_{z,L}\cap\mathcal P_z}
    \frac{N\Theta}{d-1}\,
    \mathrm d\mu^{m}<\infty.
\end{align}
Since the  denominator  $\mathcal A(L)\rightarrow \infty$ with $L \rightarrow \infty$, this defect angle term does not contribute to $F(z)$. 

\begin{figure}[tbph]
	\centering
	\includegraphics[width=8.5cm]{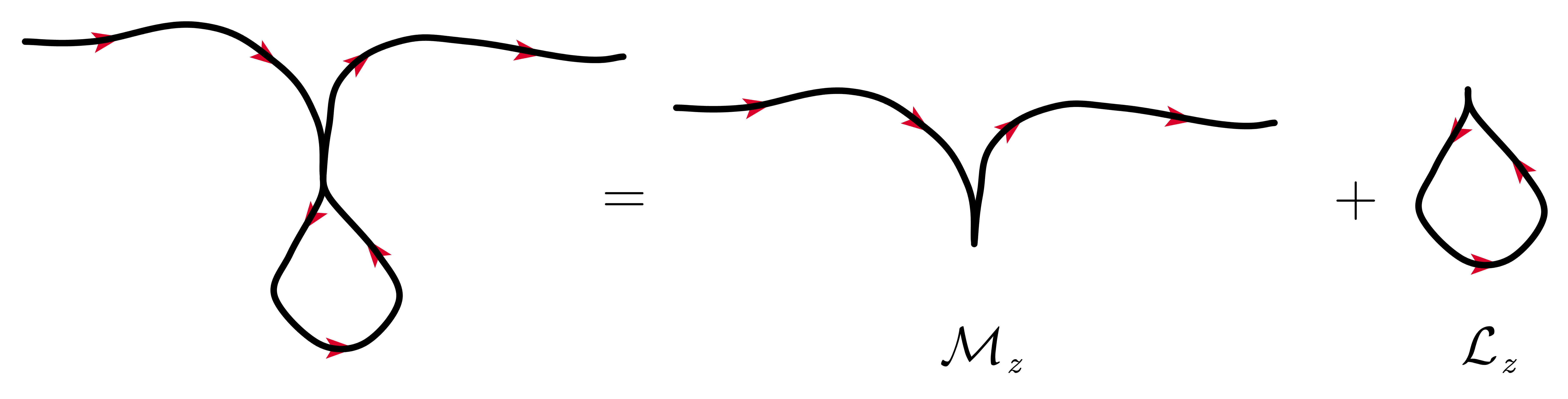}
    \captionsetup{
    justification=raggedright,
     singlelinecheck=false
    }
	\caption{Schematic illustration of the surgery at a self-intersection. The self-intersecting weak leaf decomposes into a noncompact wavefront $\mathcal M_z$ and a compact component $\mathcal L_z$ (the compactness is guaranteed by the asymptotic symmetry in transverse directions). We neglect the $\mathcal{L}_z$ part and the remainder part $\mathcal{M}_z$ then contains a cusp and can be handled by the same method as what we have done in Fig.~\ref{S3.4A}. }
	\label{Loop}
\end{figure}

For the self-intersection case, the weak leaf admits a particularly simple decomposition. As illustrated in Fig.~\ref{Loop}, when the self-intersection occurs, the leaf separates into a cusped noncompact wavefront $\mathcal{M}_{z}$, which continues toward transverse infinity, and a ``closed'' cusped component $\mathcal{L}_{z}$. Since we assume that deviation from planar or hyperbolic symmetry only occurs on a finite region (compact support), the component $\mathcal{L}_{z}$ is a compact surface. More generally, if several self-intersections occur on the same leaf, this leaf can be written as $\mathcal{S}_{z}=\mathcal{M}_{z}\cup\bigcup_{i}\mathcal{L}_{z}^{i}$. The flux $F(z)$ integral over this leaf therefore decomposes as
\begin{equation}
    \begin{aligned}
    F(z)
=
\lim_{L\rightarrow\infty}&
\frac{1}{\mathcal A(L)}
\Bigg[
\int_{\mathfrak B_{z,L}\cap\mathcal M_z}
\left(
s^a\partial_a N-\frac{\mathcal{K}N}{d-1}
\right)\mathrm{d}S
\\
&+
\sum_{i=1}^{n_z}
\int_{\mathcal L_z^{(i)}}
\left(
s^a\partial_a N-\frac{\mathcal{K}N}{d-1}
\right)\mathrm{d}S
\Bigg].
    \end{aligned}
\end{equation}
Here the cusp contributions are defined by the regularization procedure leading to Eq.~\ref{cusp}.

The asymptotic assumption plays an essential role. The weak leaves remain regular in the asymptotic planar or asymptotic hyperbolic region. Hence, a self-interaction cannot be generated at transverse infinity, and every closed component $\mathcal{L}_{z}^i$ is confined to the interior nonuniform region of the leaf. In particular, each $\mathcal{L}_{z}^i$ is contained in a compact subset of the leaf. After the cusp regularization procedure, its flux integral 
\begin{equation}
    \int_{\mathcal L_z^{(i)}}
\left(
s^a\partial_a N-\frac{\mathcal{K}N}{d-1}
\right)\mathrm{d}S<\infty.
\end{equation}
Assuming that the number $n_{z}$ of such closed components on each leaf is finite, their total contribution remains finite as $L\rightarrow\infty$. Since the denominator $\mathcal{A}(L)\rightarrow \infty$ with $L\rightarrow \infty$, each $\mathcal{L}_{z}^{i}$ does not contribute to $F(z)$.  

This observation allows us to perform a surgery on the weak leaf at each self-interaction: the closed components $\mathcal{L}_{z}^{i}$ are excised, while the noncompact wavefront $\mathcal M_{z}$ is retained as the new leaf for the subsequent inverse lapse evolution
%
%
%
and the flux function $F(z)$ is unchanged at this moment. In the future flow process, the leaf contains a cusp, and we can deal with it using the same method discussed in Fig.~\ref{S3.4A}.


Up to now, we have classified the possible singularities during the flow and given detailed methods on how to deal with these singularities. The monotonicity argument of the main text can therefore be extended to the weak leaves: for $z_a<z_b$, we have $F(z_b)\ge F(z_a)$. The flux $F(z)$ is a nondecreasing function of $z$.

\subsection{Two-boundary overdetermination}

We finally consider the global compatibility problem associated with the two horizon boundaries. We prescribe $z|_{H_1}=z_1$ and propagate the inverse lapse flow from $H_1$ into $\mathcal D$. Before reaching $H_{2}$, the monotonicity of $F(z)$ established in the main text gives
\begin{align}\label{mono}
    F(z_1)\leq F(z)\,,
\end{align}
along the leaf family generated from $H_1$.

For a general lapse distribution and boundary geometry, a flow starting from $H_1$ does not reach the other horizon section $H_{2}$. Instead, it may first reach only a portion of $H_2$ (see the red curve in Fig.~\ref{S3.4B} as a schematic diagram). Let $z_2$ denote the parameter value at which the evolving leaf first reaches $H_{2}$, and denote such ``first touched'' leaf by $\mathcal{S}_{z_2}^{(-)}$. Using the weighted-distance function $d_N(x,\mathcal{S}_{z_2})$ and the fact $N=0$ on $H_2$, one can find the $d_N(x,\mathcal{S}_{z_2})$ is constant for all $x\in H_2$. This means that the leaf $\mathcal{S}_{z_2}^{(-)}$ and $H_2$ have same weighted-distance, so that $H_{2}$ becomes a part of the level set $z=z_2$. Therefore, the level set $z=z_2$ consists of two parts: the leaf generated from $H_{1}$ is denoted by $\mathcal{S}_{z_2}^{(-)}$ and the horizon section $H_2$ is denoted by $\mathcal{S}_{z_2}^{(+)}$.

\begin{figure}
	\centering
	\includegraphics[width=5.6cm]{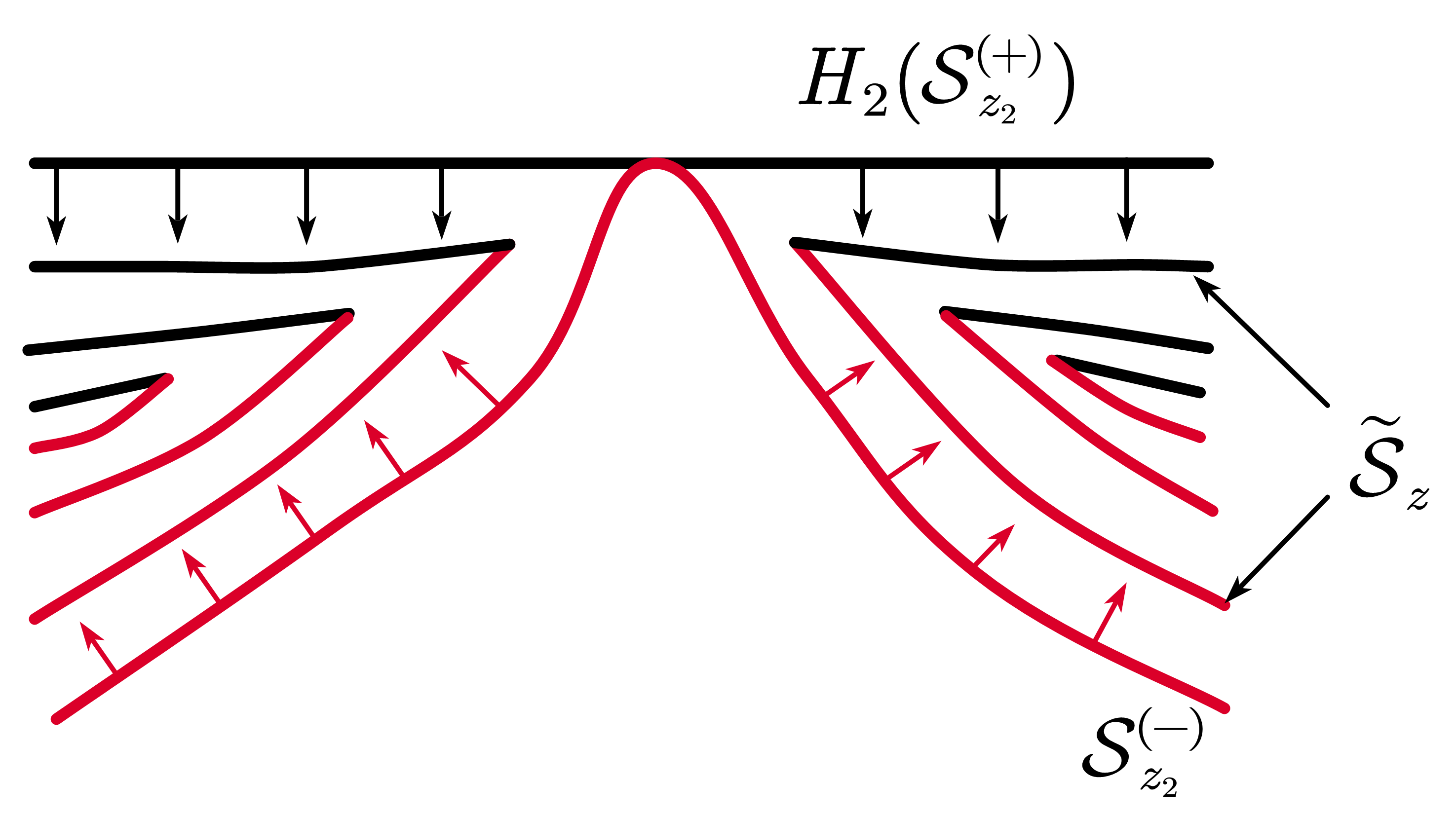}
    \captionsetup{
    justification=raggedright,
     singlelinecheck=false
    }
	\caption{Schematic illustration of the two-boundary overdetermination. At $z=z_2$, the weak level set consists of $\mathcal S_{z_2}^{(-)}$ generated from $H_1$ and $\mathcal S_{z_2}^{(+)}=H_2$. Under the viscosity continuation of the inverse lapse flow, the two branches of $\widetilde{\mathcal{S}}_{z}$ evolve toward each other and eventually merge at a ridge.}
	\label{S3.4B}
\end{figure}


Since the leaf $\mathcal{S}_{z_2}^{(-)}$ is generated from $H_{1}$ by the inverse lapse flow in the sense of viscosity solution, Eq.~\eqref{mono} gives 
\begin{equation}
    F|_{H_{1}}\le F|_{\mathcal{S}_{z_{2}}^{(-)}}\, .
\end{equation}
It remains to compare the leaf $\mathcal{S}_{z_2}^{(-)}$ with the other horizon section $H_2=\mathcal{S}_{z_2}^{(+)}$ since the $\mathcal{S}_{z_2}^{(+)}$ is just one smooth branch of the leaf at $z=z_2$. Let's consider the difference $F|_{\mathcal{S}_{z_{2}}^{(-)}}-F|_{\mathcal{S}_{z_{2}}^{(+)}}$. Denote the level set $z=z_2$ by $\widetilde S_{z_2}=\mathcal{S}_{z_2}^{(-)}\cup\mathcal{S}_{z_2}^{(+)}$. It needs to be noted that the sign of flux function $F$ depends on the orientation of the normal vector, and so there is a novel relationship 
\begin{equation}
    F|_{\widetilde {\mathcal S}_{z_2}}=F|_{\mathcal S_{z_2}^{(-)}} - F|_{\mathcal S_{z_2}^{(+)}}.
\end{equation}
To determine the sign of $F|_{\widetilde {\mathcal S}_{z_2}}$, we take $\widetilde S_{z_2}$ as the initial leaf and evolve it forward under the inverse lapse flow in the viscosity solution sense (see Fig.~\ref{S3.4B}). For $z \ge z_2$, the evolving weak leaves are denoted by
\begin{equation}
\widetilde{\mathcal S}_z = \mathcal S_z^{(-)} \cup \mathcal S_z^{(+)},
\end{equation}
where $\mathcal S_z^{(-)}$ generated from $\mathcal S_{z_2}^{(-)}$ and $\mathcal S_z^{(+)}$ generated from $\mathcal S_{z_2}^{(+)}$. 

As the leaf evolves, the two branches move toward each other forever until they eventually merge at a ridge where $z=z_r$ ($z_r$ could be finite or infinite). At this ridge, the two branches coincide geometrically but have opposite normals, so the flux on the ridge vanishes, i.e., $F|_{\widetilde{\mathcal S}_{z_{\mathrm r}}} = 0$. By the monotonicity of $F(z)$ established in the main text,
\begin{equation}
     F|_{\widetilde{\mathcal S}_{z_2}} \le F|_{\widetilde{\mathcal S}_{z_{\mathrm r}}} = 0.
\end{equation}
Hence, we obtain the difference
\begin{equation}
    F|_{\mathcal S_{z_2}^{(-)}} \le F|_{\mathcal S_{z_2}^{(+)}}.
\end{equation}
Combining this with the inequality Eq.~\eqref{mono} yields
\begin{equation}
     F|_{H_1} \le F|_{\mathcal S_{z_2}^{(-)}} \le F|_{\mathcal S_{z_2}^{(+)}} = F|_{H_2}.
\end{equation}
Therefore, the conclusion $F|_{H_1} \le F|_{H_2}$ remains valid even when the one-sided flow from $H_1$ fails to reach the entire second horizon $H_2$ simultaneously. The two-boundary overdetermination therefore does not obstruct the monotonicity argument.

\end{document}